\documentclass[journal]{IEEEtran}
\usepackage{bbm}
\usepackage{array}
\usepackage{dcolumn}
\usepackage{epsfig}
\usepackage[intlimits]{amsmath}
\usepackage{amsmath, amsfonts, yhmath, bm}
\usepackage{amssymb}
\usepackage{psfrag}
\usepackage{color,soul}
\usepackage{xcolor,cancel}
\usepackage[normalem]{ulem}
\usepackage{enumerate}
\usepackage{stackengine}
\usepackage[noadjust]{cite}
\usepackage{graphicx}
\usepackage{caption}
\usepackage{subcaption}
\usepackage{cite}
\usepackage[font=footnotesize]{caption}

\newtheorem{thm}{Theorem}
\newtheorem{lem}{Lemma}
\newtheorem{proof}{Proof}

\definecolor{purple}{rgb}{0.6,0.0,0.7}

\definecolor{cyan}{rgb}{0.2,0.6,0.7}

\newcommand {\myvec}[1] {{\mbox{\boldmath $#1$}}}
\newcommand {\mymat}[1]  {{\mbox{\boldmath $#1$}}}

\newcommand {\mS} {\mymat{S}}
\newcommand {\mX} {\mymat{X}}
\newcommand {\A} {\mymat{A}}
\newcommand {\T} {\mymat{T}}
\newcommand {\hT} {\widehat{\T}}
\newcommand {\meps} {\mymat{\varepsilon}}
\newcommand {\Omeg} {\mymat{\Omega}}

\newcommand {\mBeta} {\mymat{\beta}}
\newcommand {\mLambda} {\mymat{\Lambda}}

\newcommand {\bA} {\mybar{\A}}
\newcommand {\bB} {\mybar{\B}}

\newcommand {\bC} {\mybar{\C}}
\newcommand {\bP} {\mybar{\P}}
\newcommand {\B} {\mymat{B}}

\newcommand {\hB} {\widehat{\B}}

\newcommand {\hQ} {\widehat{\Q}}

\newcommand {\C} {\mymat{C}}
\newcommand {\D} {\mymat{D}}

\newcommand {\mPhi} {\mymat{\Phi}}

\newcommand {\E} {\mymat{E}}

\newcommand {\G} {\mymat{G}}
\newcommand {\hG} {\widehat{\G}}
\renewcommand {\H} {\mymat{H}}

\renewcommand {\O} {\textrm{O}}
\renewcommand {\P} {\mymat{P}}

\newcommand {\tP} {\widetilde{\P}}
\newcommand {\tbP} {\widetilde{\bP}}

\newcommand {\Q} {\mymat{Q}}

\newcommand {\tQ} {\widetilde{\Q}}
\newcommand {\htQ} {\widehat{\tQ}}

\newcommand {\wtC} {\widetilde{\C}}
\newcommand {\wtP} {\widetilde{\P}}

\newcommand {\I} {\mymat{I}}
\newcommand {\X} {\mymat{X}}
\newcommand {\Y} {\mymat{Y}}
\newcommand {\F} {\mymat{F}}
\newcommand {\ue} {\myvec{e}}
\newcommand {\ua} {\myvec{a}}
\newcommand {\ub} {\myvec{b}}

\newcommand {\uepsilon} {\myvec{\varepsilon}}
\newcommand {\tb} {\tilde{\ub}}

\newcommand {\ug} {\myvec{g}}

\newcommand {\uq} {\myvec{q}}
\newcommand {\huq} {\widehat{\uq}}

\newcommand {\ualpha} {\myvec{\alpha}}

\newcommand {\uo} {\myvec{0}}
\newcommand {\us} {\myvec{s}}

\newcommand {\bs} {\bar{\us}}

\newcommand {\uy} {\myvec{y}}

\newcommand {\uh} {\myvec{h}}
\newcommand {\utheta} {\myvec{\theta}}

\newcommand {\Rset} {\mathbb{R}}

\newcommand {\Zset} {\mathbb{Z}}
\newcommand {\Tr} {\text{Tr}}
\newcommand {\tps} {\rm{T}}

\makeatletter
\newsavebox\myboxA
\newsavebox\myboxB
\newlength\mylenA

\newcommand*\mybar[2][0.75]{%
    \sbox{\myboxA}{$\m@th#2$}%
    \setbox\myboxB\null
    \ht\myboxB=\ht\myboxA%
    \dp\myboxB=\dp\myboxA%
    \wd\myboxB=#1\wd\myboxA
    \sbox\myboxB{$\m@th\overline{\copy\myboxB}$}
    \setlength\mylenA{\the\wd\myboxA}
    \addtolength\mylenA{-\the\wd\myboxB}%
    \ifdim\wd\myboxB<\wd\myboxA%
       \rlap{\hskip 0.5\mylenA\usebox\myboxB}{\usebox\myboxA}%
    \else
        \hskip -0.5\mylenA\rlap{\usebox\myboxA}{\hskip 0.5\mylenA\usebox\myboxB}%
    \fi}
\makeatother

\def\comment#1{}

\DeclareSymbolFont{grb}{OML}{cmm}{b}{it}
\DeclareMathSymbol{\lambdab}{\mathord}{grb}{"15}


\newcommand{\stkout}[1]{
	\color{red}\ifmmode\text{\sout{\ensuremath{#1}}}\else\sout{#1}\fi\color{black}}


\newcommand{\addra}{}
\newcommand{\delra}{\comment}
\newcommand{\delralign}{\comment}

\begin{document}

\title{Performance Analysis of the Gaussian Quasi-Maximum Likelihood Approach for Independent Vector Analysis}

\author{Amir Weiss$^1$, Sher Ali Cheema$^2$, Martin Haardt$^2$, and Arie Yeredor$^1$

\thanks{$^1$ Authors with Tel-Aviv University, School of Electrical Engineering,
P.O.~Box 39040, Tel-Aviv 69978, Israel, e-mail:
amirwei2@mail.tau.ac.il, arie@eng.tau.ac.il }
\thanks{
$^2$ Authors with
Ilmenau University of Technology,
Communications Research Laboratory,
P.~O.~Box 10~05~65, D-98684 Ilmenau, Germany,
e-mail: \{sher-ali.cheema, martin.haardt\}@tu-ilmenau.de,
phone: +49 (3677) 69-2613,
fax: +49 (3677) 69-1195,
WWW: http://www.tu-ilmenau.de/crl.}
}

\maketitle

\begin{abstract}
Maximum Likelihood (ML) estimation requires precise knowledge of the underlying statistical model. In Quasi ML (QML), a presumed model is used as a substitute to the (unknown) true model. In the context of Independent Vector Analysis (IVA), we consider the Gaussian QML \delra{e}\addra{E}stimate\addra{ (QMLE)} of the demixing matrices set and present an (approximate) analysis of its asymptotic separation performance. In Gaussian QML the sources are presumed to be Gaussian, with covariance matrices specified by some ``educated guess". The resulting quasi-likelihood equations of the demixing matrices take a special form, recently termed an extended ``Sequentially Drilled" Joint Congruence (SeDJoCo) transformation, which is reminiscent of (though essentially different from) classical joint diagonalization. We show that asymptotically this QML\addra{E}\delra{ estimate}, i.e., the solution of the resulting extended SeDJoCo transformation, attains perfect separation (under some mild conditions) regardless of the sources' true distributions and/or covariance matrices. In addition, based on the ``small-errors" assumption, we present a first-order perturbation analysis of the extended SeDJoCo solution. Using the resulting closed-form expressions for the errors in the solution matrices, we provide closed-form expressions for the resulting Interference-to-Source Ratios (ISRs) for IVA. Moreover, we prove that asymptotically the ISRs depend only on the sources' covariances, and not on their specific distributions. \addra{As an immediate consequence of this result, we provide an asymptotically attainable lower bound on the resulting ISRs. }We also present empirical results, corroborating our analytical derivations, of three simulation experiments concerning two possible model errors - inaccurate covariance matrices and sources' distribution mismodeling.
\end{abstract}

\begin{IEEEkeywords}
Joint blind source separation, independent vector analysis, quasi maximum likelihood, extended SeDJoCo, perturbation analysis.
\end{IEEEkeywords}


\section{Introduction}
\label{sec_Intro}
The Blind Source Separation (BSS) problem \cite{cardoso1990eigen,jutten1991blind,comon1991blind,comon2010handbook} consists of retrieving signals of interest, termed the sources, from a single dataset consisting of their mixtures. One of the most popular and common paradigms for solving the BSS problem is Independent Component Analysis (ICA) \cite{comon2010handbook,comon1992independent,hyvarinen2004independent}, where the sources are assumed to be (only) mutually statistically independent random processes, and the mixtures are assumed to be linear combinations thereof, where the linear mixing operator is unknown.

Joint BSS (JBSS) \cite{li2009joint,li2011joint,anderson2012joint} is an extension of the BSS problem, where multiple datasets of mixtures are observed. JBSS is commonly addressed under the Independent Vector Analysis (IVA) paradigm \cite{kim2006independent,anderson2014independent,lee2007fast}, a (natural) extension of ICA, where each dataset is restricted to the ICA formulation, with the addition of a potentially allowed statistical dependence between each source in one dataset and (at most) one source in every \textit{different} dataset. The interest in IVA emerged due to the nature of problems with dependence between multiple datasets such as analysis of multi-subject fMRI data \cite{lee2008independent,dea2011iva,engberg2016independent} or the convolutive ICA problem, formulated in the frequency domain, using multiple frequency bins \cite{kim2006independent2,kim2010real}.

ICA and IVA with their conventional assumptions (mentioned above) describe a fully blind scenario, i.e., where no additional information is available. However, in cases where some {\it a-priori} knowledge is available, be it full/partial description of the sources' statistics or some information regarding the (linear) mixing operator(s), the scenario is termed ``semi-blind" \cite{gunther2012learning,hesse2006semi,nesta2011batch}. A particularly interesting case is when the sources' probability distributions are known {\it a-priori}, allowing the Maximum Likelihood (ML) approach to be taken. ML separation is attractive due to its (asymptotic) optimality \cite{degerine2004separation,doron2007cramer,yeredor2010blind,weiss2017the} in the sense of minimal attainable Interference to Source Ratio (ISR), a common measure for the separation performance. Concentrating on the IVA problem (of which ICA is merely a particular case) with an equal number of mixtures and sources in every dataset, it has been shown in \cite{weiss2017the,cheng2015extension}, that when the sources are zero-mean Gaussian with known and distinct temporal covariance matrices, the ML \delra{e}\addra{E}stimate\addra{ (MLE)} of the demixing operators, which in this case are essentially the inverses of the mixing matrices, may be obtained by the solution of the so called extended ``Sequentially Drilled" Joint Congruence (SeDJoCo) transformation (previously presented, though not with this name, for a single dataset (ICA) by Pham and Garat in \cite{pham1997blind}, D{\'e}gerine and Za{\"\i}di in \cite{degerine2004separation,degerine2006determinant} and Yeredor {\textit{et al.}} in \cite{yeredor2009hybrid,yeredor2010blind,yeredor2012sequentially}), which are obviously the likelihood equations in this context. In addition, a few interesting properties of the solution and two iterative solution algorithms for this problem have been proposed as well in \cite{weiss2017the}, as we shall elaborate in Section II. It is interesting to remark, that in the context of Multi-User Multiple-Input Multiple-Output (MU-MIMO) Coordinate Beamforming (CBF) setup (see \cite{cheng2015extension,cheng2016extension} for further details), the solution of (an almost identical problem to) the extended SeDJoCo equations serves as the beamforming transformation which achieves the elimination of multiuser interference as well as the maximization of the desired signal components.

On the other hand, Quasi Maximum Likelihood (QML) approaches (e.g., \cite{pham2001blind,ghogho2000,todros2010,bronstein2005quasi}), make some model assumptions on the sources, and use some ``educated guess" for the associated parameters, in order to facilitate a ``quasi-" ML\addra{E}\delra{ estimate}, which would hopefully approximate the ML\addra{E}\delra{ estimate} when the assumed model is close to reality. \addra{Thus, a QML \delra{e}\addra{E}stimate\addra{ (QMLE)} is defined as any estimate which can be interpreted as the ML\addra{E}\delra{ estimate} under an assumption of some presumed, hypothesized model, not necessarily describing the true state of nature.} For example, if the sources are presumed to be Gaussian with known temporal auto- and cross-covariances (between respective sources across different datasets only, as in a standard IVA setup), the implied likelihood of the observed mixtures from all datasets is expressed and maximized (with respect to the unknown mixing matrices), essentially resulting in a set of extended SeDJoCo equations. However, since the sources are not necessarily Gaussian, and their temporal auto- and cross-covariances are actually unknown, in this case the resulting extended SeDJoCo equations are in fact the \textit{quasi}-likelihood (rather than the likelihood) equations.

While the true ML\addra{E}\delra{ estimate} enjoys some appealing, well-known properties, such as consistency and asymptotic efficiency \cite{vantrees2013detection}, these properties are generally not shared by QML\addra{E}\delra{ estimate}s. Nevertheless, in this paper we show that the \addra{Gaussian }QML\addra{E}\delra{ estimate}s of the demixing matrices, obtained by a solution to an extended SeDJoCo transformation for the general IVA problem, are consistent\footnote{An estimate is considered consistent in the context of IVA/ICA if its resulting ISRs all tend to zero (perfect separation) when the observation lengths tend to infinity.} estimates of the separating matrices (under some mild conditions), i.e., their consistency is not restricted to a semi-blind scenario. Namely, these estimates are consistent even when the sources' temporal covariance matrices are unknown and/or when the sources are not Gaussian. Consequently, the solution of the extended SeDJoCo equations is apparently of considerable interest in the general context of IVA. Moreover, in reality, even in the semi-blind scenario the \textit{a-priori} information (or assumptions) are not likely to be exact, e.g., error-prone estimated versions of the covariance matrices and/or an approximate distribution of the sources might be available. Consequently, the performance analysis of the extended SeDJoCo solution is of high interest also in various (more realistic) scenarios, as detailed in the sequel, establishing our motivation for this work.
Our main contributions in this paper are as follows:
\begin{itemize}
	\item A ``\delra{F}\addra{f}irst-order" perturbation analysis of the extended SeDJoCo solution: We provide a full analytical derivation with closed-form expressions for the errors in the solution matrices resulting from perturbations in the coefficients under the ``small-errors" assumption (i.e., when neglecting second- and higher-order terms). This result is not confined to the context of JBSS and may be used in other contexts as well (e.g., in the MU-MIMO CBF problem mentioned earlier\footnote{with very slight modifications}).
	\item Consistency of the extended SeDJoCo solution as the QML\addra{E}\delra{ estimate}: We show that a solution of extended SeDJoCo, i.e., a solution to the quasi-likelihood equations, provides a consistent estimate of the demixing matrices in the IVA scenario (under some mild conditions), even if the {\it a-priori} information is unavailable or inaccurate.
	\item Asymptotic performance analysis in the context of IVA: Under the ``small-errors" analysis in the asymptotic regime, we provide closed-form expressions for the resulting ISRs. In the particular case where the sources are indeed Gaussian and the covariance matrices are known, these expressions coincide with the induced Cram\'er-Rao Lower Bound (iCRLB, \cite{doron2007cramer}) on the ISRs, given implicitly (as the Fisher information matrix elements) in \cite{weiss2017the}. However (and more importantly), our results also predict the performance in a ``quasi-ML" framework, when the actual distribution and/or covariance matrices are different from their presumed values.
	\addra{\item ``Universal" asymptotic ISR of the Gaussian QML\addra{E}\delra{ estimate} for IVA: We provide a Theorem (and its constructive proof) which states that the asymptotic ISR attained by the Gaussian QML\addra{E}\delra{ estimate} does \textit{not} depend on the sources' full distributions, but rather on their Second-Order Statistics (SOS) \textit{only} (and on some parameters related to the presumed hypothesized model, which will be specifically stated in the sequel).
		\item A lower bound on the Gaussian QMLE ISR - We show that the Gaussian QMLE enjoys the appealing ISR-equivariance property, which in turn, when combined with the Theorem mentioned above, leads to a lower bound on the ISR attained by the Gaussian QMLE - the Gaussian iCRLB - for mixtures with \textit{any} sources' distributions (under some mild conditions, stated explicitly in the paper).}
\end{itemize}

The rest of this paper is organized as follows.\addra{ The ending part of this section is devoted to notations. }In Section \ref{sec_ProblemFormulation} we present the semi-blind Gaussian IVA problem formulation along with the resulting extended SeDJoCo (likelihood) equations. Some important properties of the solution are outlined, focusing on the consistency of (the solution as) the QML\addra{E}\delra{ estimate}s and the implied necessary conditions. Section \ref{sec_AnalyticalAnalysis} is dedicated to the analytical performance analysis of the solution, deriving (approximate) estimation-error terms, followed by the resulting (approximate) ISR. Simulations results are presented in Section \ref{sec_Simulation}, corroborating our analytical results, and Section \ref{sec_Conclusion} is dedicated to conclusions and to final remarks.
\vspace{-0.5cm}
\addra{\subsection{Notations and Preliminaries}
	We use $a, \ua$ and $\A$ for a scalar, column vector and matrix, respectively, where $A_{ij}$ denotes the $(i,j)$-th element of the matrix $\A$ and $a[i]$ denotes the $i$-th element of the vector $\ua$. The superscripts $(\cdot)^{\tps}$ and $(\cdot)^{-1}$ denote the transposition and inverse operators, respectively. The notations $E[\cdot], \Tr(\cdot), {\rm cum}(\cdot,\cdot,\cdot,\cdot)$ and $\|\cdot\|_2$ denote the expectation, trace, $4$-th order joint cumulant \cite{brillinger2001time} and $\ell^2$-norm of their arguments, respectively. The convolution operator is denoted by $\ast$. We also denote by $\I_{K}$ the $K\times K$ identity matrix, and the pinning vector $\ue_k$ denotes the $k$-th column of $\I_{K}$. Using these notations, we define $\E_{ij}\triangleq\ue_i\ue_j^{\tps}$ and $\delta_{ij}\triangleq\ue_i^{\tps}\ue_j$. We also define $\text{vec}(\cdot)$ as the operator which concatenates the columns of an $M \times N$ matrix into an $MN \times 1$ column vector. Furthermore, we define the operator $\text{Bdiag}(\cdot,\ldots,\cdot)$, which creates an $ML\times ML$ block-diagonal matrix from its $L$, $M\times M$ matrix arguments. Finally, the all zeros-matrix (with proper dimensions) is denoted by $\O$.
}

\section{Problem Formulation and the Solution}
\label{sec_ProblemFormulation}
\subsection{Gaussian IVA and the Extended SeDJoCo Transformation}
Consider $M$ datasets of linear, static, memoryless mixtures
\begin{equation} \label{IVA_model}
	\X^{(m)}=\A^{(m)}\mS^{(m)},\;\;\; \forall m\in\{1,\ldots,M\},
\end{equation}
where $\mS^{(m)}=\left[\us_1^{(m)}\;\cdots\;\us_K^{(m)}\right]^{\rm{T}}\in\Rset^{K\times T}$ denotes a matrix of $K$ source signals of length $T$ (for all $m\in\{1,\ldots,M\}$), belonging to
the $m$-th out of $M$ datasets. In each dataset the sources are mixed by an unknown (deterministic) respective mixing-matrix $\A^{(m)}\in\Rset^{K\times K}$, and the
observed mixture signals are given by $\X^{(m)}\in\Rset^{K\times T}$. Based on the observed mixtures datasets $\left\{\X^{(m)}\right\}_{m=1}^{M}$, it is desired to estimate all
$M$ mixing-matrices and thereby recover the source signals. In the same manner as in the standard ICA model, in IVA, too, the sources within each dataset are assumed to be
mutually statistically independent. Clearly, IVA amounts to $M$ independent standard ICA problems when no statistical dependence between source signals across different datasets
exists. However, in IVA statistical dependence between respective sources from different datasets is considered, i.e., the vector $\us_k^{(m_1)}$ may depend on the vector
$\us_k^{(m_2)}$ (for all $m_1,m_2\in\{1,\ldots,M\}$ and all $k\in\{1,\ldots,K\}$), but any two vectors $\us_{k_1}^{(m_1)}$ and $\us_{k_2}^{(m_2)}$ are statistically independent
when $k_1 \neq k_2$ for any $m_1,m_2 \in\{1,\ldots,M\}$.

It turns out \cite{weiss2017the} that in the semi-blind Gaussian model, where the sources are zero-mean Gaussian with known and distinct temporal covariance matrices, the resulting likelihood equations for obtaining the ML\addra{E}\delra{ estimate}s of the matrices $\left\{\B^{(m)}\triangleq {\left[\A^{(m)}\right]^{-1}}\right\}_{m=1}^{M}$ in the IVA problem require a solution of the so-called ``extended SeDJoCo" problem as follows. Let us denote by $\C_k^{(m_1,m_2)} \triangleq E\left[\us_k^{(m_1)} {\us_{k}^{(m_2)\rm{T}}}\right]\in\Rset^{T\times T}$ the temporal covariance matrices between the $k$-th source of the $m_1$-th dataset and the $k$-th source of the $m_2$-th dataset. Now define the $k$-th Source Component Vector (SCV) as $\bs_k \triangleq \text{vec}(\mS_k)$, where $\mS_k$ is the $k$-th source component matrix, defined as $\mS_k \triangleq \left[\us_k^{(1)}\;\cdots\;\us_k^{(M)}\right]^{\rm{T}} \in \Rset^{M \times T}\addra{, \forall k\in\{1,\ldots,K\}}$\delra{, for every $k\in\{1,\ldots,K\}$}\delra{, and where the $\text{vec}(\cdot)$ operator concatenates the columns of an $M \times T$ matrix into an $MT \times 1$ vector}. The covariance matrix of each SCV is given by
\begin{equation} \label{SCM_def}
	\bC_k \triangleq E\left[\bs_k\bs_k^{\rm{T}}\right] =\begin{bmatrix}
		\C_k^{(1,1)} & \cdots & \C_k^{(1,M)}\\
		\vdots & \ddots & \vdots\\
		\C_k^{(M,1)} & \cdots & \C_k^{(M,M)}\end{bmatrix} \in \Rset^{MT \times MT},
\end{equation}
and we denote the respective block-partition of its inverse as
\begin{equation}\label{invereseblockmatrices}
	\bC_k^{-1} \triangleq \begin{bmatrix}
		\P_k^{(1,1)} & \cdots & \P_k^{(1,M)}\\
		\vdots & \ddots & \vdots\\
		\P_k^{(M,1)} & \cdots & \P_k^{(M,M)}\end{bmatrix} \triangleq \bP_k,
\end{equation}
where $\P_k^{(m_1,m_2)} \in \Rset^{T \times T}$, to be used below. Using the fact that the mixtures, being linear combinations of (jointly) Gaussian random vectors, are also (jointly) Gaussian, we can explicitly write the (log-)likelihood function of the given data $\left\{\X^{(m)}\right\}_{m=1}^{M}$ with respect to (w.r.t.) the demixing matrices $\left\{\B^{(m)}\right\}_{m=1}^{M}$. After differentiating w.r.t. $\B^{(m)}$ for every $m\in\{1,\ldots,M\}$ and equating to $\O \in \Rset^{K \times K}$\delra{, the all-zeros matrix, }(see \cite{weiss2017the} for the full detailed derivation)\addra{,} we obtain the likelihood equations,
\begin{multline}\label{e_sedjoco}
	\sum_{\ell=1}^{M}{\hB_{\text{ML}}^{(m)}\hQ_k^{(m,\ell)}{\hB_{\text{ML}}}^{(\ell)\rm{T}}}\ue_k=\ue_k,\\
	\forall k \in \{1,\ldots,K\}, \forall m \in \{1,\ldots,M\},
\end{multline}
where $\hB_{\text{ML}}^{(m)}$ is the ML\addra{E}\delra{ estimate} of $\B^{(m)}$\delra{,}\addra{ and} the matrices $\hQ_{k}^{(m_1,m_2)}$, termed the ``target-matrices", are defined as
\begin{multline}\label{target_matrices_def}
	\hQ_{k}^{(m_1,m_2)} \triangleq \frac{1}{T}\X^{(m_1)}\P_{k}^{(m_1,m_2)}{\X^{(m_2)\rm{T}}}\in \Rset^{K\times K},\\
	\forall k \in \{1,\ldots,K\}, \forall m_1,m_2 \in \{1,\ldots,M\}\addra{.}\delra{,}
\end{multline}
\delra{and where the pinning vector $\ue_k$ denotes the $k$-th column of the $K\times K$ identity matrix. }Thus, the likelihood equations \eqref{e_sedjoco} take the form of the extended SeDJoCo equations, which are formulated more generally as follows:

\noindent \textit{Given $KM^2$ target-matrices $\left\{\Q_k^{(m_1,m_2)}\right\}$, $k \in \{1,\ldots,K\}$, $m_1,m_2 \in \{1,\ldots,M\}$, find a set of $M$ $K \times K$ matrices $\left\{\B^{(m)}\right\}_{m=1}^M$, such that}
\begin{multline}\label{extended_sedjoco_original}
	\left[\sum_{\ell=1}^{M}{\B^{(m)}\Q_k^{(m,\ell)}{\B^{(\ell)\rm{T}}}}\right]\ue_k \triangleq \D_{k}^{(m)}\ue_k =\ue_k,\\
	\forall k \in \{1,\ldots,K\}, \forall m \in \{1,\ldots,M\}.
\end{multline}
The meaning of this statement is that the $k$-th column of each transformed matrix $\D_{k}^{(m)}$, has a ``drilled"\footnote{We say that a $K\times K$ matrix  $\D$ is ``drilled" along its $k$-th column (or row), if that column (or row) is all-zeros, except for its $k$-th element ($D_{kk}$).} structure, hence the name of this transformation.
\vspace{-0.3cm}
\subsection{Properties of the Extended SeDJoCo Solution}
\label{properties_of_e_sedjoco}
Two general (context-free) important properties of solutions of the extended SeDJoCo problem \eqref{extended_sedjoco_original} are \textit{existence} and \textit{non-uniqueness}. As we have shown in \cite{weiss2017the}, when the matrices
\begin{equation} \label{augmented_taget}
	\Omeg_k \triangleq \begin{bmatrix}
		\Q_k^{(1,1)} & \cdots & \Q_k^{(1,M)}\\
		\vdots & \ddots & \vdots\\
		\Q_k^{(M,1)} & \cdots & \Q_k^{(M,M)}\end{bmatrix} \in \Rset^{KM \times KM},
\end{equation}
are Positive Definite (PD) for all $k\in\{1,\ldots,K\}$, a solution of \eqref{extended_sedjoco_original} is guaranteed to exist. By recalling the definition of the target-matrices in \eqref{target_matrices_def}, it is easily (and not surprisingly) seen that a solution always exists in the context of our IVA problem\footnote{In fact, the proof in \cite{weiss2017the} even shows that the ML\addra{E}\delra{ estimate} always exists} (under the general \delra{Second-Order Statistics (}SOS\delra{) }-based IVA identifiability conditions). However, it has been shown \cite{yeredor2016multiple}, \cite{weiss2017the} that the solution is generally not unique, and might therefore lead to a local (rather than the global) maximizer of the likelihood function. Nevertheless, an identification-correction scheme of non-optimal solutions (which are not the ML\addra{E}\delra{ estimate}) was derived in \cite{weiss2017amaximum} for the case of $M=1$ dataset (i.e., for ICA, but can be readily extended to IVA with $M>1$). The scheme first identifies whether a given solution is the global maximizer of the likelihood or not, and, if needed, applies a correction which leads it (with high probability) to the correct (global maximizer) ML\addra{E}\delra{ estimate}.

Next, we address the issue of consistency, a property of an extended SeDJoCo solution in the context of the QML approach for IVA. We begin by noting that for $\left\{\A^{(m)}=\I_K\right\}_{m=1}^M$ (a ``non-mixing" condition\delra{, where $\I_K$ denotes the $K\times K$ identity matrix}), the QML target-matrices $\hQ_{k}^{(m_1,m_2)}$ \eqref{target_matrices_def}, with a set of matrices $\left\{\tP_k^{(m_1,m_2)}\right\}$ replacing the true $\left\{\P_k^{(m_1,m_2)}\right\}$ as defined in \eqref{invereseblockmatrices}, are {\em asymptotically diagonal} under some mild conditions stated in the Lemma below. This important property will be used later to establish consistency of the resulting estimates under mixing conditions.
\begin{lem}[Asymptotic diagonality of the target-matrices when $\left\{\X^{(m)}=\mS^{(m)}\right\}_{m=1}^M$]
	Let us temporarily denote the observation-length-dependent covariance matrices and the presumed respective blocks as in \eqref{SCM_def} and \eqref{invereseblockmatrices} as $\C_{k[T]}^{(m_1,m_2)},\tP_{k[T]}^{(m_1,m_2)}\in\Rset^{T\times T}$, respectively, for an observation length $T$.
	Consider the following three conditions:
	\begin{enumerate}
		\item \label{cond1}
		The following limits exist and are finite:
		\begin{multline*}
			\phi_{k\delra{,}\ell}^{(m_1,m_2)}\triangleq\lim_{T\to\infty}\frac{1}{T}{\rm Tr}\left(\tP_{k[T]}^{(m_1,m_2)}\C_{\ell[T]}^{(m_2,m_1)}\right),\\
			\forall k,\ell\in\{1,\ldots,K\}.
		\end{multline*}
		\item \label{cond2}
		All matrices $\C_{k[T]}^{(m_1,m_2)},\tP_{k[T]}^{(m_1,m_2)}$ can be element-wise bounded by an exponentially-decaying Toeplitz matrix, namely there exist some finite $\rho$ and a positive $\alpha$, such that
		\begin{multline*}
			\left|C_{k[T]\addra{,\tau_1\tau_2}}^{(m_1,m_2)}\delra{[\tau_1,\tau_2]}\right|, \left|\widetilde{P}_{k[T]\addra{,\tau_1\tau_2}}^{(m_1,m_2)}\delra{[\tau_1,\tau_2]}\right|< \rho^2\cdot e^{-\alpha|\tau_1-\tau_2|},\\
			\forall k\in\{1,\ldots,K\}, \forall m_1,m_2\in\{1,\ldots,M\},\\
			\forall \tau_1,\tau_2\in\{1,\ldots,T\}, \forall T\in\Zset^+\delra{,}\addra{.}
		\end{multline*}
		\delra{where $C_{k[T]}^{(m_1,m_2)}[\tau_1,\tau_2]$ and $\widetilde{P}_{k[T]}^{(m_1,m_2)}[\tau_1,\tau_2]$ denote the $(\tau_1,\tau_2)$-th element of $\C_{k[T]}^{(m_1,m_2)}$ and $\tP_{k[T]}^{(m_1,m_2)}$, resp..}
		\item \label{cond3}
		The $4$-th order joint cumulants of each SCV can be similarly bounded by an exponentially decaying function of the time differences, i.e., there exist some finite $\varrho$ and a positive $\beta$, such that for each $k\in\{1,\ldots,K\}$
		\begin{multline*}
			\left|{\rm cum}(s^{(m_1)}_k[\tau_1],s^{(m_2)}_k[\tau_2],s^{(m_3)}_k[\tau_3],s^{(m_4)}_k[\tau_4])\right|\\
			<\varrho^4\cdot e^{-\beta(|\tau_1-\tau_2|+|\tau_3-\tau_4|+|\tau_1-\tau_3|+|\tau_2-\tau_4|+|\tau_1-\tau_4|+|\tau_2-\tau_3|)}
		\end{multline*}
		for all $k,m_1,m_2,m_3,m_4$ and for all $\tau_1,\tau_2,\tau_3,\tau_4$\delra{, where ${\rm cum}(\cdot,\cdot,\cdot,\cdot)$ denotes the $4$-th order joint cumulant \cite{brillinger2001time} of its arguments}.
	\end{enumerate}
	Under these conditions the following property holds:\\
	For $\left\{\A^{(m)}=\I_K\right\}_{m=1}^M$ (so that $\left\{\X^{(m)}=\mS^{(m)}\right\}_{m=1}^M$), the QML target-matrices $\hQ_{k}^{(m_1,m_2)}$ are asymptotically diagonal for all $k\in\{1,\ldots,K\}$ and all $m_1,m_2\in\{1,\ldots,M\}$,
	\begin{multline}
		\left\{\A^{(m)}=\I_K\right\}_{m=1}^M: \\
		\hQ_{k}^{(m_1,m_2)}=\frac{1}{T}\mS^{(m_1)}\tP_{k[T]}^{(m_1,m_2)}\mS^{(m_2)\rm{T}}\xrightarrow[T\rightarrow\infty]{m.s.} \mPhi_k^{(m_1,m_2)},
	\end{multline}
	where $\mPhi_k^{(m_1,m_2)}$ is a diagonal matrix with $\phi_{k\delra{,}1}^{\addra{(m_1,m_2)}}, \phi_{k\delra{,}2}^{\addra{(m_1,m_2)}}, \ldots, \phi_{k\delra{,}K}^{\addra{(m_1,m_2)}}$ along its diagonal, and where the convergence is in the mean square sense \cite{karr1993convergence}.
\end{lem}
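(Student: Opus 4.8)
The plan is to prove the claimed convergence entrywise: since all the target-matrices have the fixed size $K\times K$, entrywise mean-square convergence is equivalent to convergence in the squared Frobenius norm, so it suffices to show that $E[([\hQ_k^{(m_1,m_2)}]_{ij}-[\mPhi_k^{(m_1,m_2)}]_{ij})^2]\to0$ as $T\to\infty$ for every $i,j\in\{1,\ldots,K\}$. Under $\left\{\A^{(m)}=\I_K\right\}_{m=1}^{M}$ this entry is the normalized bilinear form
\begin{equation*}
\left[\hQ_k^{(m_1,m_2)}\right]_{ij}=\frac{1}{T}\,\us_i^{(m_1)\tps}\,\tP_{k[T]}^{(m_1,m_2)}\,\us_j^{(m_2)},
\end{equation*}
that is, $T^{-1}$ times a $\widetilde{P}_{k[T],\tau_1\tau_2}^{(m_1,m_2)}$-weighted double sum of the products $s_i^{(m_1)}[\tau_1]\,s_j^{(m_2)}[\tau_2]$ over $\tau_1,\tau_2\in\{1,\ldots,T\}$. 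I would split its mean-square error into the squared bias and the variance and bound each in turn.

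For the bias I would take expectations: since in the IVA model distinct sources are mutually independent and zero-mean, $E[\us_i^{(m_1)}\us_j^{(m_2)\tps}]=\O$ whenever $i\neq j$, so the off-diagonal entries have vanishing mean; while for $i=j$, using $E[\us_i^{(m_2)}\us_i^{(m_1)\tps}]=\C_{i[T]}^{(m_2,m_1)}$ and the cyclic invariance of the trace,
\begin{equation*}
E\!\left[\left[\hQ_k^{(m_1,m_2)}\right]_{ii}\right]=\frac{1}{T}\,\Tr\!\left(\tP_{k[T]}^{(m_1,m_2)}\,\C_{i[T]}^{(m_2,m_1)}\right),
\end{equation*}
which converges to $\phi_{ki}^{(m_1,m_2)}$ as $T\to\infty$ by Condition~\ref{cond1}. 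Hence $E[\hQ_k^{(m_1,m_2)}]$ tends to the diagonal matrix $\mPhi_k^{(m_1,m_2)}$ of the statement, and the squared-bias contribution to the mean-square error vanishes.

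For the variance I would expand $E[([\hQ_k^{(m_1,m_2)}]_{ij})^2]$ as $T^{-2}$ times a fourfold sum over $\tau_1,\ldots,\tau_4\in\{1,\ldots,T\}$ of $\widetilde{P}_{k[T],\tau_1\tau_2}^{(m_1,m_2)}\,\widetilde{P}_{k[T],\tau_3\tau_4}^{(m_1,m_2)}$ times the fourth-order moment $E[s_i^{(m_1)}[\tau_1]\,s_j^{(m_2)}[\tau_2]\,s_i^{(m_1)}[\tau_3]\,s_j^{(m_2)}[\tau_4]]$. For $i\neq j$ independence factorizes this moment into the autocovariance product $E[s_i^{(m_1)}[\tau_1]s_i^{(m_1)}[\tau_3]]\,E[s_j^{(m_2)}[\tau_2]s_j^{(m_2)}[\tau_4]]$, while $E[[\hQ_k^{(m_1,m_2)}]_{ij}]=0$, so the variance equals $T^{-2}$ times that single fourfold sum. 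For $i=j$ I would instead use the moment-to-cumulant formula: the fourth-order moment equals the three pair-partition products of covariances plus the fourth-order joint cumulant; the partition pairing $\{\tau_1,\tau_2\}$ with $\{\tau_3,\tau_4\}$ reconstructs exactly $(E[[\hQ_k^{(m_1,m_2)}]_{ii}])^2$ and cancels in the variance, and the two remaining partition terms together with the cumulant term constitute the variance. In each of these surviving fourfold sums the summand is, by Conditions~\ref{cond2} and~\ref{cond3}, bounded in modulus (up to a constant) by $\prod_{\{a,b\}\in\mathcal{E}}e^{-\gamma|\tau_a-\tau_b|}$ for a \emph{connected} edge set $\mathcal{E}$ on $\{1,2,3,4\}$ with $\gamma=\min(\alpha,\beta)>0$: the two $\tP$-factors always supply the edges $\{1,2\}$ and $\{3,4\}$, and the surviving covariance pairing --- or the cumulant, whose bound supplies all six edges --- joins these into a connected graph. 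It then remains to invoke the elementary estimate that, for any connected edge set $\mathcal{E}$ on $n$ vertices,
\begin{equation*}
\sum_{\tau_1,\ldots,\tau_n=1}^{T}\ \prod_{\{a,b\}\in\mathcal{E}}e^{-\gamma|\tau_a-\tau_b|}\ \le\ T\left(\sum_{d\in\Zset}e^{-\gamma|d|}\right)^{n-1}=O(T),
\end{equation*}
which follows by summing out a spanning tree from its leaves inward (each elimination costing the finite factor $\sum_{d\in\Zset}e^{-\gamma|d|}$, the remaining root index ranging over $T$ values). With $n=4$ this makes each surviving fourfold sum $O(T)$, so the variance is $O(T)/T^2=O(1/T)\to0$; combined with the vanishing bias, $E[([\hQ_k^{(m_1,m_2)}]_{ij}-[\mPhi_k^{(m_1,m_2)}]_{ij})^2]\to0$ for all $i,j$, which is the asserted mean-square convergence.

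I expect the main obstacle to be the bookkeeping of the variance step: carrying out the moment-to-cumulant expansion cleanly in the two regimes ($i=j$ versus $i\neq j$), checking that \emph{every} surviving term's interaction graph is connected so that the exponential bounds can be spent along a spanning tree, and making the $O(T)$ estimate rigorous uniformly in $T$ --- which is exactly what the uniform-in-$T$ hypotheses in Conditions~\ref{cond2} and~\ref{cond3} are designed to provide.
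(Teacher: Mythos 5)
Your proposal is correct and follows essentially the same route the paper sketches (the paper omits the detailed proof but outlines exactly this argument): compute the mean of each entry $\widehat{Q}_{k,p\ell}^{(m_1,m_2)}$ via the trace identity, obtaining $0$ off-diagonal and $\frac{1}{T}\Tr\bigl(\tP_{k[T]}^{(m_1,m_2)}\C_{\ell[T]}^{(m_2,m_1)}\bigr)\to\phi_{k\ell}^{(m_1,m_2)}$ on the diagonal by Condition~\ref{cond1}, then show the variances vanish using Conditions~\ref{cond2} and~\ref{cond3}. Your moment-to-cumulant expansion with the connected-graph / spanning-tree $O(T)$ bound is a sound and complete way to carry out the variance step the authors declare ``straightforward and technical.''
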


Note that the conditions of the Lemma are quite loose and are readily satisfied as long as the temporal covariances of the SCVs and their presumed inverses, $\bC_k$ and $\tbP_k$, respectively, as well as the joint cumulants tensors, all have bounded diagonals and a sufficient rate of decay of their elements away from their diagonals.

The proof of this Lemma is rather straightforward and technical, and is therefore omitted from here due to length considerations. However, note only that the mean of\addra{ the element} $\delra{\hQ}\addra{\widehat{Q}}_{k\addra{,p\ell}}^{(m_1,m_2)}\delra{[p,\ell]}=\frac{1}{T}\us_p^{(m_1)\rm T}\tP_{k[T]}^{(m_1,m_2)}\us^{(m_2)}_{\ell}$ reads
\begin{multline}
	E\left[\delra{\hQ}\addra{\widehat{Q}}_{k\addra{,p\ell}}^{(m_1,m_2)}\delra{[p,\ell]}\right]=\frac{1}{T}{\rm Tr}\left(\tP_{k[T]}^{(m_1,m_2)}E\left[\us^{(m_2)}_{\ell}\us_p^{(m_1)\rm T}\right]\right)=\\
	\begin{cases}
		\frac{1}{T}{\rm Tr}\left(\tP_{k[T]}^{(m_1,m_2)}\C_{\ell[T]}^{(m_2,m_1)}\delra{]}\right)\xrightarrow[T\rightarrow\infty]{}\phi_{k\delra{,}\ell}^{(m_1,m_2)}\addra{,} & p=\ell\\
		0\addra{,} & p\ne \ell
	\end{cases},
\end{multline}
and using Conditions \ref{cond2} and \ref{cond3} it can be shown that their variances tend to zero (note further that the exponential decay of the bound in these conditions is sufficient but not necessary). This important property of asymptotic diagonality establishes the consistency of estimates obtained as extended SeDJoCo solutions. To see this, we note the following.

Let $\left\{\hB^{(m)}_{\text{\normalfont QML}}\right\}_{m=1}^M$ denote a (specific) solution of the extended SeDJoCo equations \eqref{e_sedjoco} with target-matrices defined as in \eqref{target_matrices_def}. By Lemma 1 we have, for all $k\in\left\{1,\ldots,K\right\}$ and for all $m_1,m_2\in\left\{1,\ldots,M\right\}$,
\begin{align}\label{QMLglobalmatrices}
	&\hB^{(m_1)}_{\text{\normalfont QML}}\hQ^{(m_1,m_2)}_k\hB^{(m_2)\tps}_{\text{\normalfont QML}}=\\
	&\hB^{(m_1)}_{\text{\normalfont QML}}\left(\frac{1}{T}\mX^{(m_1)}\P^{(m_1,m_2)}_k\X^{(m_2)\tps}\right)\hB_{\text{\normalfont QML}}^{(m_2)\tps}=\\
	&\hB^{(m_1)}_{\text{\normalfont QML}}\left(\frac{1}{T}\A^{(m_1)}\mS^{(m_1)}\P^{(m_1,m_2)}_k\mS^{(m_2)\tps}\A^{(m_2)\tps}\right)\hB_{\text{\normalfont QML}}^{(m_2)\tps}\triangleq\\
	&\hG^{(m_1)}\left(\frac{1}{T}\mS^{(m_1)}\P^{(m_1,m_2)}_k\mS^{(m_2)\tps}\right)\hG^{(m_2)\tps}\xrightarrow[T\rightarrow\infty]{\rm Lemma\; 1}\\\label{QMLglobalmatriceslast}
	&\hG^{(m_1)}\mPhi^{(m_1,m_2)}_k\hG^{(m_2)\tps},
\end{align}
where we have defined $\hG^{(m)}\triangleq\hB^{(m)}_{\text{\normalfont QML}}\A^{(m)}$ as the QML estimated global demixing-mixing matrix of the $m$-th dataset. Note that by \eqref{QMLglobalmatrices}-\eqref{QMLglobalmatriceslast} we have actually shown that the QML\addra{E}\delra{ estimate} of the demixing matrices shares the ISR-equivariance property (\cite{yeredor2010blind,weiss2017the}), meaning that its resulting separation performance (in terms of ISR) is independent of the mixing matrices (or, put differently, is the same for \textit{any} set of mixing matrices). Thus, returning to \eqref{QMLglobalmatrices}-\eqref{QMLglobalmatriceslast}, since $\hB^{(m)}_{\text{\normalfont QML}}$ is a solution of \eqref{e_sedjoco}, the implied asymptotic extended SeDJoCo equations, expressed in terms of $\left\{\hG^{(m)}\right\}_{m=1}^M$, take the form
\begin{multline}\label{aympextendedSeDJoCo}
	\sum_{\ell=1}^{M}\hG^{(m)}\mPhi^{(m,\ell)}_k\hG^{(\ell)\tps}\ue_k=\ue_k,\\
	\forall k \in \{1,\ldots,K\}, \forall m \in \{1,\ldots,M\}.
\end{multline}
It is trivial to show that for $M=1$ (ICA) and $M=2$ a set of diagonal matrices (which implies perfect separation) solves \eqref{aympextendedSeDJoCo} (for the case $M=1$, see the explicit solution in \cite{weiss2018onconsistency}), meaning that the estimates $\left\{\hB^{(m)}_{\text{\normalfont QML}}\right\}_{m=1}^M$ are indeed consistent. For $M>3$,\delra{ although we currently do not have a proof that a set of diagonal matrices solves \eqref{aympextendedSeDJoCo}, we have witnessed empirically in a variety of simulation scenarios (some of which presented in Section \ref{sec_Simulation}) that this is usually the case. Indeed,} it is easily seen that any set of diagonal matrices $\left\{\hG^{(m)}_o\right\}_{m=1}^M$ \addra{potentially }solves \eqref{aympextendedSeDJoCo} for all $k\neq j\in\{1,\ldots,K\}$ and for all $m\in\{1,\ldots,M\}$. The only \delra{$KM$}\addra{$MK$} \delra{scalar }equations left to solve (for all the \delra{$KM$}\addra{$MK$} diagonal elements of $\left\{\hG^{(m)}_o\right\}_{m=1}^M$) are
\begin{multline}\label{aympextendedSeDJoCo3}
	\sum_{\ell=1}^{M}\widehat{G}_{o\addra{,kk}}^{(m)}\delra{[k,k]}\phi_{k\delra{,}k}^{(m,\ell)}\widehat{G}_{o\addra{,kk}}^{(\ell)}\delra{[k,k]}=1,\\
	\forall k \in \{1,\ldots,K\}, \forall m \in \{1,\ldots,M\},
\end{multline}
which are anyhow related to the inherent ambiguity (in a fully blind scenario) of the scale of the unobserved sources. \addra{Thus, it immediately follows that if a solution to \eqref{aympextendedSeDJoCo3} exists, a perfect separating solution exists to the asymptotic quasi-likelihood equations. Although we currently do not have a proof that a solution to \eqref{aympextendedSeDJoCo3} (which is a set of $MK$ nonlinear, quadratic equations in $MK$ unknowns) always exists, we have witnessed empirically in a variety of simulation scenarios (some of which are presented in Section \ref{sec_Simulation}) that this is usually the case. }In conclusion, under the mild conditions stated in Lemma 1, and assuming the fundamental identifiability conditions hold, a solution of the extended SeDJoCo equations serves as a consistent estimate of \delra{the}\addra{a} set of\addra{ (perfectly)} separating matrices in a general (temporally-diversed) IVA problem.

Having addressed the issue of consistency, thereby justifying the assumption that asymptotically the solution resides in a ``close" vicinity to \addra{an }exact separati\delra{o}n\addra{g solution}, we now turn to assess the asymptotic performance analysis of this solution. In the next section we address the analytical aspects of the performance analysis starting by a ``first-order" perturbation analysis of the extended SeDJoCo solution, followed by the resulting (approximated) ISR in the context of \delra{an }IVA\delra{ problem}. 

\section{Analytical Performance Analysis of the Extended SeDJoCo}
\label{sec_AnalyticalAnalysis}

\subsection{Derivatives of the Solution w.r.t. the Target-Matrices}
Let us define the following matrix
\begin{equation}\label{e_sedjoco_function}
	\F \triangleq \left[\F_1 \; \F_2 \; \cdots \; \F_M\right]\in \Rset^{K\times KM},
\end{equation}
where
\begin{multline}\label{mth_e_sedjoco_function}
	\F_m \triangleq \sum_{k=1}^{K}\sum_{\ell=1}^{M}\E_{kk}{\B^{(\ell)}\Q_k^{(\ell,m)}{\B^{(m)\rm{T}}}}-\I_{K}\in \Rset^{K\times K}, \\
	\forall m\in\{1,\ldots,M\}\addra{.}\delra{,}
\end{multline}
\delra{and $\E_{ij}\triangleq\ue_i\ue_j^{\rm{T}}$. }\addra{Clearly, }$\F$ is\delra{ clearly} a function of all the $\left\{\B^{(m)}\right\}_{m=1}^{M}$ and $\left\{\Q_k^{(m_1,m_2)}\right\}$ matrices. When the target-matrices $\left\{\Q_k^{(m_1,m_2)}\right\}$ are fixed, the solutions (in terms of the $\left\{\B^{(m)}\right\}_{m=1}^M$ matrices) of the equation $\F=\O\delra{\in \Rset^{K\times KM}}$ are the extended SeDJoCo solutions, induced by this fixed set of target-matrices \delra{(where $\O\in \Rset^{K\times KM}$ is the all zeros-matrix)}. We are interested in the perturbations of the elements of the solution matrices $\left\{\B^{(m)}\right\}_{m=1}^{M}$ caused by a perturbation in the elements of the target-matrices $\left\{\Q_k^{(m_1,m_2)}\right\}$. To this end, we first concentrate on the perturbations of the elements of $\left\{\B^{(m)}\right\}_{m=1}^{M}$ caused by a perturbation in the single element $Q_{k,ij}^{(m_1,m_2)}$. Since $\F$ must maintain its zero value of $\O\in\Rset^{K\times KM}$ under these perturbations, exploiting its total derivative w.r.t. $Q_{k,ij}^{(m_1,m_2)}$ we obtain the following equation
\begin{equation}\label{total_derivative}
	\frac{d\F}{dQ_{k,ij}^{(m_1,m_2)}} =\frac{\partial\F}{\partial Q_{k,ij}^{(m_1,m_2)}}+\sum_{k=1}^{K}\sum_{p,q=1}^{M}\frac{\partial\F}{\partial B_{pq}^{(m)}}\cdot\frac{dB_{pq}^{(m)}}{dQ_{k,ij}^{(m_1,m_2)}}=\O.
\end{equation}
Let us carefully compute each term of equation \eqref{total_derivative}. First, notice that
\begin{equation}\label{first_term_matrix}
	\frac{\partial\F}{\partial Q_{k,ij}^{(m_1,m_2)}}=\left[\frac{\partial\F_1}{\partial Q_{k,ij}^{(m_1,m_2)}} \; \frac{\partial\F_2}{\partial Q_{k,ij}^{(m_1,m_2)}} \; \cdots \; \frac{\partial\F_M}{\partial Q_{k,ij}^{(m_1,m_2)}}\right],
\end{equation}
so it is enough to compute each block of \eqref{first_term_matrix}\addra{. After differentiation and some straight forward algebraic simplifications (see Appendix \ref{appendix_RQ} for the detailed computation), we obtain}:
{\delralign{\begin{align}\label{first_term_mth_matrix_old}
			\begin{split}
				&\frac{\partial\F_m}{\partial Q_{k,ij}^{(m_1,m_2)}}= \sum_{\tilde{k}=1}^{K}\sum_{\ell=1}^{M}\E_{\tilde{k}\tilde{k}}{\B^{(\ell)}\frac{\partial\Q_{\tilde{k}}^{(\ell,m)}}{\partial Q_{k,ij}^{(m_1,m_2)}}{\B^{(m)\rm{T}}}}=\\
				&\delta_{mm_2}\E_{kk}\tb_i^{(m_1)}\tb_j^{(m_2)\rm{T}}+\\
				&\left\{\begin{array}{ll}
					\delta_{mm_1}\E_{kk}\tb_j^{(m_2)}\tb_i^{(m_1)\rm{T}} & \mbox{$m_1\neq m_2$}\\
					\delta_{mm_2}\E_{kk}\tb_j^{(m_2)}\tb_i^{(m_1)\rm{T}} & \mbox{$m_1=m_2,i\neq j$}\\
					0 & \mbox{$m_1=m_2,i=j$} \end{array} \right.,
			\end{split}
	\end{align}}
	{\addra{\begin{align}\label{first_term_mth_matrix}
				\frac{\partial\F_m}{\partial Q_{k,ij}^{(m_1,m_2)}}=&\delta_{mm_2}\E_{kk}\tb_i^{(m_1)}\tb_j^{(m_2)\rm{T}}+\nonumber\\
				&\left\{\begin{array}{ll}
					\delta_{mm_1}\E_{kk}\tb_j^{(m_2)}\tb_i^{(m_1)\rm{T}}, & \mbox{$m_1\neq m_2$}\\
					\delta_{mm_2}\E_{kk}\tb_j^{(m_2)}\tb_i^{(m_1)\rm{T}}, & \mbox{$m_1=m_2,i\neq j$}\\
					0, & \mbox{$m_1=m_2,i=j$} \end{array} \right.,\\
				\frac{\partial\F_n}{\partial B_{pq}^{(m)}}=&\E_{pq}\Q_p^{(m,n)}\B^{(n)\rm{T}}+\nonumber\\
				&\delta_{mn}\sum_{k=1}^{K}\sum_{\ell=1}^{M}\E_{kk}\B^{(\ell)}\Q_k^{(\ell,m)}\E_{qp},\label{second_term_mth_matrix}
	\end{align}}}where $\tb_i^{(m)}$ denotes the $i$-th column of $\B^{(m)}$\delra{ and $\delta_{mn}$ equals $1$ when $m=n$, and equals zero otherwise}.\delra{ Next, we compute the derivative of the second term in \eqref{total_derivative}. Again, we do this by first computing for the $n$-th block, that is
		\begin{align}\label{second_term_mth_matrix_old}
			\begin{split}
				&\frac{\partial\F_n}{\partial B_{pq}^{(m)}} = \sum_{k=1}^{K}\sum_{\ell=1}^{M}\E_{kk}{\frac{\partial}{\partial B_{pq}^{(m)}}}\left(\B^{(\ell)}\Q_{k}^{(\ell,n)}\B^{(n)\rm{T}}\right)=\\
				&\sum_{k=1}^{K}\E_{kk}\Bigg[\delta_{mn}\left(\B^{(m)}\Q_k^{(m,m)}\E_{qp}+\E_{pq}\Q_k^{(m,m)}+\right.\\
				&\sum_{\substack{\ell=1\\\ell\neq m}}^{M}\left.\B^{(\ell)}\Q_k^{(\ell,m)}\E_{qp}\right)+\left(1-\delta_{mn}\right)\left(\E_{pq}\Q_k^{(m,n)}\B^{(n)\rm{T}}\right)\Bigg]=\\
				&\sum_{k=1}^{K}\E_{kk}\left(\E_{pq}\Q_k^{(m,n)}\B^{(n)\rm{T}}+\delta_{mn}\sum_{\ell=1}^{M}\B^{(\ell)}\Q_k^{(\ell,m)}\E_{qp}\right),
			\end{split}
		\end{align}
		where we have used (\cite{petersen2008matrix})
		\begin{align}
			&\frac{\partial\mX\B\mX^{\rm{T}}}{\partial X_{ij}}=\mX\B\E_{ji}+\E_{ij}\B\mX^{\rm{T}},\\
			&\frac{\partial\mX\A}{\partial X_{ij}}=\E_{ij}\A.
\end{align}}}
Now, for ease of exposition, define
\begin{equation}\label{definition_of_Y_and_H}
	\Y\triangleq\frac{\partial\F}{\partial Q_{k,ij}^{(m_1,m_2)}}, \;\;\; \H_{[m,p,q]}\triangleq \frac{\partial\F}{\partial B_{pq}^{(m)}},
\end{equation}
so that equation \eqref{total_derivative} can now be written as
\begin{equation}\label{total_derivative_simple_form}
	-\Y=\sum_{m=1}^{M}\sum_{p,q=1}^{K}\H_{[m,p,q]}\cdot\frac{dB_{pq}^{(m)}}{dQ_{k,ij}^{(m_1,m_2)}}.
\end{equation}
Applying the $\text{vec}(\cdot)$ operator on both sides of \eqref{total_derivative_simple_form} yields
\begin{equation}\label{total_derivative_vec_form}
	-\uy=\sum_{m=1}^{M}\sum_{p,q=1}^{K}\uh_{[m,p,q]}\cdot\frac{dB_{pq}^{(m)}}{dQ_{k,ij}^{(m_1,m_2)}}.
\end{equation}
where $\uy\triangleq\text{vec}\left(\Y\right)$ and $\uh_{[m,p,q]}\triangleq\text{vec}\left(\H_{[m,p,q]}\right)$.
Finally, define
\begin{align}\label{definitionfothetaandH}
	&\H\triangleq\left[\uh_{[1,1,1]} \uh_{[1,2,1]} \cdots \uh_{[1,K,K]} \uh_{[2,1,1]} \uh_{[2,2,1]} \cdots \uh_{[M,K,K]}\right],\\
	&\utheta\triangleq\left[\theta_{[1,1,1]} \theta_{[1,2,1]} \cdots \theta_{[1,K,K]} \theta_{[2,1,1]} \theta_{[2,2,1]} \cdots \theta_{[M,K,K]}\right]^{\rm{T}},
\end{align}
where $\theta_{[m,p,q]}\triangleq\frac{dB_{pq}^{(m)}}{dQ_{k,ij}^{(m_1,m_2)}}$, to obtain equation \eqref{total_derivative} in its compact form
\begin{equation}\label{total_derivative_simplest_form}
	-\uy=\H\utheta.
\end{equation}
It can be shown that $\H\in\Rset^{K^2M\times K^2M}$ is full-rank\footnote{with probability 1} (the proof is straightforward, though rather technical, so we omit this part from the paper). Therefore, the derivatives of the elements of $\left\{\B^{(m)}\right\}_{m=1}^{M}$ w.r.t. the single element $Q_{k,ij}^{(m_1,m_2)}$ are given by
\begin{equation}\label{derivative_solution}
	\utheta=-\H^{-1}\uy.
\end{equation}
We emphasize that the linear system of equations \eqref{derivative_solution} may be solved separately for obtaining the derivative w.r.t. each element $Q_{k,ij}^{(m_1,m_2)}$, i.e., for every combination of $k,i,j\in\{1,\ldots,K\}$ and $m_1,m_2\in\{1,\ldots,M\}$, so as to obtain all the derivatives of all elements of the solution matrices $\left\{\B^{(m)}\right\}_{m=1}^{M}$ w.r.t. all elements of all target-matrices.

\subsection{A ``First-Order" Perturbation Analysis}
Let us now return to our semi-blind Gaussian IVA problem. Assuming the conditions stated in \delra{Theorem}\addra{Lemma} 1 hold, when the target-matrices are
\begin{align}\label{target_matrices_def2}
	\begin{split}
		\Q_{k}^{(m_1,m_2)}=&\frac{1}{T}E\left[\X^{(m_1)}\P_{k}^{(m_1,m_2)}{\X^{(m_2)\rm{T}}}\right]=\lim_{T\to \infty} \hQ_k^{(m_1,m_2)},\\
		&\forall k \in \{1,\ldots,K\}, \forall m_1,m_2 \in \{1,\ldots,M\},
	\end{split}
\end{align}
by the consistency of the ML\addra{E}\delra{ estimate}, a solution to \eqref{e_sedjoco} is $\left\{\hB_{\text{ML}}^{(m)}=\B^{(m)}\right\}_{m=1}^{M}$, i.e., the set of true demixing matrices.
For ease of the exposition, we define
\begin{align}\label{wideQmatrix}
	\tQ &\triangleq \left[\Q_1^{(1,1)} \;\Q_2^{(1,1)}\; \cdots \;\Q_K^{(1,1)} \;\Q_1^{(2,1)} \;\Q_2^{(2,1)}\; \cdots \;\Q_K^{(M,M)}\right],\\
	\uq &\triangleq \text{vec}^*\left(\tQ\right)\in\Rset^{M_q\times 1},
\end{align}
where $\text{vec}^*(\cdot)$ concatenates the columns of the matrix $\tQ$ into a column vector but takes each element duplicated by symmetry only once, on its first occurrence (since $\left\{\Q_k^{(m,m)}\right\}$ are symmetric and $\Q_k^{(m_1,m_2)}=\Q_k^{(m_2,m_1)\rm{T}}$), $M_q=\frac{MK^2(1+MK)}{2}$, and $\htQ,\huq$ are defined in the same manner as $\tQ,\uq$ only with the (finite sample size) target-matrices $\left\{\hQ_k^{(m_1,m_2)}\right\}$, as defined in \eqref{target_matrices_def}. Now, since the elements of the solution matrices of the extended SeDJoCo equations are determined by the elements of the target-matrices, we may write
\begin{equation}\label{e_sedjoco_analytic_function}
	\widehat{B}_{ij}^{(m)}=\tilde{f}_{ij}^{(m)}\left(\huq\right), \forall m\in\{1,\ldots,M\}, \forall i,j\in\{1,\ldots,K\},
\end{equation}
where $\tilde{f}_{ij}^{(m)}:\Rset^{M_q\times 1}\rightarrow\Rset$ denotes the function that maps a given set of target-matrices to the $(i,j)$-th element of the \delra{quasi-}\addra{Q}ML\addra{E}\delra{ estimate} of the matrix $\B^{(m)}$.
Now, define $\uepsilon_{q}\triangleq\huq-\uq$ as the estimation error (vector) in the estimation of $\uq$.
Assuming $\tilde{f}_{ij}^{(m)}$ is an analytical function in the neighborhood of $\huq=\uq$, and that $\|\uepsilon_{q}\|_2$ is ``small enough" (in the sense that the second- and higher-order terms of the Taylor expansion of $\tilde{f}_{ij}^{(m)}$ in the neighborhood of $\huq=\uq$ are negligible w.r.t. the first-order term),\delra{ where $\|\cdot\|_2$ denotes the $\ell^2$-norm,} we may write
\begin{equation}\label{firstordertaylor}
	\widehat{B}_{ij}^{(m)} \approx B_{ij}^{(m)}+\left(\left.\frac{d\widehat{B}_{ij}^{(m)}}{d\huq}\right|_{\footnotesize{\huq=\uq}}\right)\left(\huq-\uq\right)\triangleq B_{ij}^{(m)}+\ug_{ij}^{(m)\rm{T}}\uepsilon_q,
\end{equation}
where we have neglected second- and higher-order terms of the estimation error $\uepsilon_q$, and defined $\ug_{ij}^{(m)}$ as the $(i,j,m)$-th gradient vector of $\widehat{B}_{ij}^{(m)}$ w.r.t. the vector $\huq$, evaluated at $\huq=\uq$. Hence, if we define $\varepsilon_{B\addra{,ij}}^{(m)}\delra{[i,j]}\triangleq \widehat{B}_{ij}^{(m)}-B_{ij}^{(m)}$, equation \eqref{firstordertaylor} may be written as
\begin{equation}\label{errorinB}
	\varepsilon_{B\addra{,ij}}^{(m)}\delra{[i,j]}\approx \ug_{ij}^{(m)\rm{T}}\uepsilon_q.
\end{equation}
\subsection{The resulting ISR in the context of IVA}
\delra{Since i}\addra{I}n the context of IVA we have $E\left[\uepsilon_q\right]=\uo$, \addra{so }we also have
\begin{multline}\label{approximatedunbiasedness}
	E\left[\varepsilon_{B\addra{,ij}}^{(m)}\delra{[i,j]}\right] \approx \ug_{ij}^{(m)\rm{T}}E\left[\uepsilon_q\right]=0, \\ \forall m\in\{1,\ldots,M\},\forall i,j\in\{1,\ldots,K\}.
\end{multline}
Exploiting the ISR-equivariance property of the (Q)ML\addra{E}\delra{ estimate} of the demixing matrices, it suffices to consider and evaluate the resulting ISR, defined as
\begin{multline} \label{ISR_defenition}
	\text{ISR}_{ij}^{(m)} \triangleq E\left[\frac{\left|\left(\hB^{(m)}\A^{(m)}\right)_{ij}\right|^2}{\left|\left(\hB^{(m)}\A^{(m)}\right)_{ii}\right|^2}\right]\cdot \frac{E\left[{\us_j^{(m)\rm{T}}}\us_j^{(m)}\right]}{E\left[{\us_i^{(m)\rm{T}}}\us_i^{(m)}\right]},\\
	\; 1\leq i \neq j\leq K, \forall m \in \{1,\ldots,M\},
\end{multline}
for any particular (arbitrarily chosen) set of mixing matrices $\left\{\A^{(m)}\right\}_{m=1}^{M}$. Thus, we consider the convenient choice of the set of identity matrices, i.e., $\left\{\A^{(m)}=\I_K=\B^{(m)}\right\}_{m=1}^{M}$, and for ease of notation we define the block diagonal matrix $\bA \triangleq \text{Bdiag}\left(\A^{(1)},\ldots,\A^{(M)}\right) \in \Rset^{KM \times KM}$\delra{, where the $\text{Bdiag}(\cdot)$ operator creates a block-diagonal matrix from its square matrix arguments}. Hence, we notice that in this case ($\bA=\I_{KM}$),
{\delralign{\begin{align}\label{globalmatrixatI_old}
			\hT^{(m)}&\triangleq \hB^{(m)}\A^{(m)}=\hB^{(m)}\\
			&=\B^{(m)}+\meps_B^{(m)}=\I_K+\meps_B^{(m)}, \; \forall m\in \{1,\ldots,M\},
\end{align}}}
\addra{\begin{equation}\label{globalmatrixatI}
		\hT^{(m)}\triangleq\hB^{(m)}\A^{(m)}=\hB^{(m)}=\B^{(m)}+\meps_B^{(m)}=\I_K+\meps_B^{(m)},
\end{equation}}
\addra{for all $m\in \{1,\ldots,M\}$}, where $\meps_B^{(m)}\in\Rset^{K\times K}$ denotes the estimation error matrix in the estimation of $\B^{(m)}$ (whose $(i,j)$-th element is $\varepsilon_{B\addra{,ij}}^{(m)}\delra{[i,j]}$), so that
{\delralign{\begin{align}\label{globalmatrixerrortermsinB_old}
			&\widehat{T}^{(m)}_{ij}=\varepsilon_B^{(m)}[i,j],\\
			1\leq i &\neq j\leq K, \forall m \in \{1,\ldots,M\}.\nonumber
\end{align}}}
{\addra{\begin{equation}\label{globalmatrixerrortermsinB}
			\widehat{T}^{(m)}_{ij}=\varepsilon_{B\addra{,ij}}^{(m)},\; 1\leq i\neq j\leq K, \forall m \in \{1,\ldots,M\}.
	\end{equation}}
	\delra{If we a}\addra{A}ssum\delra{e}\addra{ing} that $\varepsilon_{B\addra{,ij}}^{(m)}\delra{[i,j]}\ll 1$ for all $i,j,m$ and, for simplicity of the exposition, that all sources have unit power, it follows that
	\begin{multline}\label{approximatedISR}
		\text{ISR}_{ij}^{(m)}=E\left[\frac{\left|\widehat{T}_{ij}^{(m)}\right|^2}{\left|\widehat{T}_{ii}^{(m)}\right|^2}\right]\approx E\left[\left|\widehat{T}_{ij}^{(m)}\right|^2\right]=E\left[\left|\varepsilon_{B\addra{,ij}}^{(m)}\delra{[i,j]}\right|^2\right],\\ 1\leq i \neq j\leq K, \forall m\in\{1,\ldots,M\}.
	\end{multline}
	Substituting \eqref{errorinB} into \eqref{approximatedISR}, we have that
	\begin{equation}\label{ISRintermsofCq}
		\text{ISR}_{ij}^{(m)}\approx E\left[\left|\varepsilon_{B\addra{,ij}}^{(m)}\delra{[i,j]}\right|^2\right] \approx \ug_{ij}^{(m)\rm{T}}\C_{\hat{q}_{(I)}}\ug_{ij}^{(m)},
	\end{equation}
	where $\C_{\hat{q}_{(I)}}$ denotes the covariance matrix of $\huq$ when $\bA=\I_{KM}$, i.e.,
	\begin{equation}\label{covarianveofQ}
		\C_{\hat{q}_{(I)}}\triangleq \left.E\left[\left(\huq-\uq\right)\left(\huq-\uq\right)^{\rm{T}}\right]\right|_{\scriptsize{\bA=\I_{KM}}}=\left.E\left[\uepsilon_{q}\uepsilon_{q}^{\rm{T}}\right]\right|_{\scriptsize{\bA=\I_{KM}}}.
	\end{equation}
	Let us write $\C_{\hat{q}_{(I)}}$$=E$$\left.\left[\huq\huq^{\rm{T}}\right]-\uq\uq^{\rm{T}}\right|_{\scriptsize{\bA=\I_{KM}}}$ and compute each term separately. First, recall that $\uq=\text{vec}\addra{^*}\left(\tQ\right)$, and when $\bA=\I_{KM}$ we get
	\begin{align}\label{valuesofQelements}
		Q_{k,ij}^{(m_1,m_2)}&=\frac{1}{T}E\left[\us_i^{(m_1)\rm{T}}\P_k^{(m_1,m_2)}\us_j^{(m_2)}\right]\\
		&=\delta_{ij}\frac{1}{T}\Tr\left(\C_i^{(m_2,m_1)}\P_k^{(m_1,m_2)}\right),
	\end{align}
	\delra{where $\Tr(\cdot)$ is the trace operator, }hence all the matrices $\left\{\Q_{k}^{(m_1,m_2)}\right\}$ are diagonal in this case.\addra{ In addition, }\delra{D}\addra{d}efine the indexing function\delra{s}
	\begin{align}\label{indexfunctions}
		\addra{\text{ind}_{\ell}}&\addra{\triangleq i_{\ell}+(j_{\ell}-1)K+(k_{\ell}-1)K^2} \nonumber \\
		&\addra{+(m_{\ell}-1)K^3+(n_{\ell}-1)MK^3, \ell=1,2,}
	\end{align}
	\addra{for all $1\leq i_{\ell},j_{\ell},k_{\ell}\leq K, \; 1\leq m_{\ell},n_{\ell}\leq M$, and denote $\tilde{i}:=\text{ind}[1], \tilde{j}:=\text{ind}[2]$,}
	{\delralign{\begin{align}\label{indexfunctions_old}
				\begin{split}
					\tilde{i} &:= \text{ind1}[i_1,j_1,k_1,m_1,n_1] \\
					&\triangleq i_1+(j_1-1)K+(k_1-1)K^2+(m_1-1)K^3\\
					&+(n_1-1)MK^3, \\
					\tilde{j} &:= \text{ind2}[i_2,j_2,k_2,m_2,n_2] \\
					&\triangleq i_2+(j_2-1)K+(k_2-1)K^2+(m_2-1)K^3\\
					&+(n_2-1)MK^3, \\
					&1 \leq i_1,j_1,k_1,i_2,j_2,k_2\leq K, \; 1 \leq m_1,n_1,m_2,n_2\leq M,
				\end{split}
	\end{align}}}
	so that now we may write \addra{in shorthand}
	\begin{align}\label{single_element_of_qqT}
		&\left(\uq\right)_{\tilde{i}} = \delta_{i_1j_1}\frac{1}{T}\Tr\left(\C_{i_1}^{(n_1,m_1)}\P_{k_1}^{(m_1,n_1)}\right),\nonumber \\
		\addra{\Rightarrow}  &\addra{\left(\uq\uq^{\rm{T}}\right)_{\tilde{i}\tilde{j}} = \delta_{i_1j_1}\delta_{i_2j_2}\cdot} \nonumber\\ &\addra{\frac{1}{T^2}\Tr\left(\C_{i_1}^{(n_1,m_1)}\P_{k_1}^{(m_1,n_1)}\right)\Tr\left(\C_{i_2}^{(n_2,m_2)}\P_{k_2}^{(m_2,n_2)}\right).}
	\end{align}
	\delra{and
		\begin{multline}\label{single_element_of_qqT_old}
			\left(\uq\uq^{\rm{T}}\right)_{\tilde{i}\tilde{j}} = \delta_{i_1j_1}\delta_{i_2j_2}\cdot \\ \frac{1}{T^2}\Tr\left(\C_{i_1}^{(n_1,m_1)}\P_{k_1}^{(m_1,n_1)}\right)\Tr\left(\C_{i_2}^{(n_2,m_2)}\P_{k_2}^{(m_2,n_2)}\right).
	\end{multline}}
	To complete the computation of the elements of $\C_{\hat{q}_{(I)}}$, we now address the term $E\left[\huq\huq^{\rm{T}}\right]$. By definition of the elements of $\huq$, and again using $\bA=\I_{KM}$, we have that
	\begin{multline}\label{secondtermofCq}
		\;\;\left(E\left[\huq\huq^{\rm{T}}\right]\right)_{\tilde{i}\tilde{j}}=\\
		\frac{1}{T^2}E\left[\us_{i_1}^{(m_1)\rm{T}}\P_{k_1}^{(m_1,n_1)}\us_{j_1}^{(n_1)}\us_{i_2}^{(m_2)\rm{T}}\P_{k_2}^{(m_2,n_2)}\us_{j_2}^{(n_2)}\right].
	\end{multline}
	\addra{Finally,}\delra{By} using the linearity\delra{ property} of the expectation and trace operators, exploiting the mutual statistical independence between all SCVs, and subtracting \eqref{single_element_of_qqT}, we\delra{ finally} obtain
	{\delralign{\begin{equation}
				\label{CovairanceofQ_expression_old}
				\begin{aligned}
					&\left(\C_{\hat{q}_{(I)}}\right)_{\tilde{i}\tilde{j}}=\\
					&\begin{cases}
						\Tr\left(\C_{i_1}^{(m_2,m_1)}\P_{k_1}^{(m_1,n_1)}\C_{i_1}^{(n_1,n_2)}\P_{k_2}^{(n_2,m_2)}\right)+\\ \Tr\left(\C_{i_1}^{(n_2,m_1)}\P_{k_1}^{(m_1,n_1)}\C_{i_1}^{(n_1,m_2)}\P_{k_2}^{(m_2,n_2)}\right),\\
						\;\quad\quad\quad\quad\quad\quad\quad\quad\quad\quad\quad\quad\quad\quad\quad\quad i_1=j_1=i_2=j_2 \\
						\Tr\left(\C_{i_1}^{(m_2,m_1)}\P_{k_1}^{(m_1,n_1)}\C_{j_1}^{(n_1,n_2)}\P_{k_2}^{(n_2,m_2)}\right),\\
						\quad\quad\quad\quad\quad\quad\quad\quad\quad\quad\quad\quad\quad (i_1,j_1)=(i_2,j_2), i_1\neq j_1 \\
						\Tr\left(\C_{i_1}^{(n_2,m_1)}\P_{k_1}^{(m_1,n_1)}\C_{i_2}^{(n_1,m_2)}\P_{k_2}^{(m_2,n_2)}\right),\\
						\quad\quad\quad\quad\quad\quad\quad\quad\quad\quad\quad\quad\quad (i_1,j_1)=(j_2,i_2), i_1\neq j_1 \\
						0,\\
						\;\quad\quad\quad\quad\quad\quad\quad\quad\quad\quad\quad\quad\quad\quad\quad\quad\quad\quad\quad \text{otherwise}
					\end{cases}
				\end{aligned},
	\end{equation}}}
	{\addra{\begin{align}\label{CovairanceofQ_expression}
				&\left(\C_{\hat{q}_{(I)}}\right)_{\tilde{i}\tilde{j}}=\nonumber\\
				&\delta_{i_1i_2}\delta_{j_1j_2}\Tr\left(\C_{i_1}^{(m_2,m_1)}\P_{k_1}^{(m_1,n_1)}\C_{j_1}^{(n_1,n_2)}\P_{k_2}^{(n_2,m_2)}\right)+\nonumber\\
				&\delta_{i_1j_2}\delta_{j_1i_2}\Tr\left(\C_{i_1}^{(n_2,m_1)}\P_{k_1}^{(m_1,n_1)}\C_{i_2}^{(n_1,m_2)}\P_{k_2}^{(m_2,n_2)}\right),
	\end{align}}}
	where we have used the assumption that the sources are Gaussian (and in particular, Isserlis' \delra{t}\addra{T}heorem \cite{isserlis1918formula}, regarding the computation of higher-order moments of the multivariate Gaussian distribution) \textit{only} in the computation of the terms for which $i_1=j_1=i_2=j_2$ (see Appendix \ref{appendix_a}).
	
	\delra{In }Appendix \ref{appendix_b} \delra{we present}\addra{contains} the solution for the system of equation\addra{s} \delra{\eqref{total_derivative_simplest_form}}\addra{\eqref{derivative_solution}}, for the case where $\bA=\I_{KM}$. Solving \delra{\eqref{total_derivative_simplest_form}}\addra{\eqref{derivative_solution}} for each element of $\uq$ yields $M_q$ vectors, which will be denoted here by $\left\{\utheta_r\right\}_{r=1}^{M_q}$. Now, notice that if we define the matrix whose columns are $\left\{\utheta_r\right\}_{r=1}^{M_q}$, we have that
	\begin{align}\label{GradientMatrix}
		\begin{split}
			\G &\triangleq \left[\utheta_1 \; \utheta_2 \; \cdots \; \utheta_{M_q}\right] \\
			&=\left[\tilde{\ug}_1 \; \tilde{\ug}_2 \; \cdots \; \tilde{\ug}_{MK^2}\right]^{\rm{T}}\in\Rset^{MK^2\times M_q},
		\end{split}
	\end{align}
	where
	\begin{equation}\label{indexfunctiongradient}
		\tilde{\ug}_{i+(j-1)K+(m-1)K^2} = \ug_{ij}^{(m)},
	\end{equation}
	i.e., the rows of $\G$ are exactly the gradient vectors $\ug_{ij}^{(m)}$ defined in \eqref{firstordertaylor}. \delra{With the closed-form expressions for the gradient vectors $\ug_{ij}^{(m)}$ (obtained by the rows of the matrix $\G$) and for the elements of $\C_{\hat{q}_{(I)}}$ (given in \eqref{CovairanceofQ_expression}), we have obtained closed-form expressions for the resulting (approximated) ISR elements \eqref{ISR_defenition} attained by the (ML) solution of the extended SeDJoCo equations.}\addra{To summarize, we have obtained closed-form expressions for the resulting (approximated) ISR elements \eqref{ISR_defenition} attained by the (QML) solution of the extended SeDJoCo equations. These expressions are computed as follows:
		\begin{enumerate}
			\item Compute the covariance matrix $\C_{\hat{q}_{(I)}}$ (whose elements are given explicitly in \eqref{CovairanceofQ_expression});
			\item Compute the gradient matrix $\G$ (defined in \eqref{GradientMatrix}), whose $M_q$ columns are the solutions of \eqref{total_derivative_simplest_form}, given explicitly in Appendix \ref{appendix_a} (each with its corresponding indices);
			\item Obtain the gradient vectors $\ug_{ij}^{(m)}$ (as the rows of $\G$);
			\item Compute the predicted (approximated) ISRs via \eqref{ISRintermsofCq}.
	\end{enumerate}}
	
	\addra{Fortunately, it turns out that the asymptotic ISR obtained by the (QML) solution of the extended SeDJoCo is determined, and therefore may be accurately predicted, only by the SOS of the sources, rather than by their full distributions, as stated in following Theorem.
		\begin{thm}[``Universal" asymptotic ISR of the Gaussian QMLE for IVA] \label{theorem1}
			For model \eqref{IVA_model}, under the conditions stated in Lemma 1, the asymptotic ISR (defined in \eqref{ISR_defenition}) of the Gaussian QMLE, obtained as a solution of the extended SeDJoCo equations, does not depend on the sources' full distributions, but only on their covariance matrices $\left\{\C_k^{(m_1,m_2)}\right\}$, and the matrices $\left\{\P_k^{(m_1,m_2)}\right\}$ used for the construction of the target-matrices \eqref{target_matrices_def}.
		\end{thm}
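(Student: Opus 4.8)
The plan is to show that every ingredient appearing in the closed-form ISR expression \eqref{ISRintermsofCq} — namely the covariance matrix $\C_{\hat{q}_{(I)}}$ and the gradient vectors $\ug_{ij}^{(m)}$ — is already a function of only $\left\{\C_k^{(m_1,m_2)}\right\}$ and $\left\{\P_k^{(m_1,m_2)}\right\}$, with no further dependence on the sources' higher-order statistics. The argument naturally splits into two parts, corresponding to these two objects.

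\emph{Step 1: The covariance matrix $\C_{\hat{q}_{(I)}}$ depends only on SOS.} Inspecting \eqref{CovairanceofQ_expression}, every nonzero entry is a sum of traces of products of the form $\C_{\cdot}^{(\cdot,\cdot)}\P_{\cdot}^{(\cdot,\cdot)}\C_{\cdot}^{(\cdot,\cdot)}\P_{\cdot}^{(\cdot,\cdot)}$, and hence manifestly a function of the source covariances and the presumed inverse blocks alone. The only point requiring care is the diagonal block $i_1=j_1=i_2=j_2$: there, the computation of $E[\huq\huq^{\rm T}]$ in \eqref{secondtermofCq} involves a fourth moment of a single SCV, which in general would bring in the $4$-th order cumulant of that SCV. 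However — and this is the key observation already flagged after \eqref{CovairanceofQ_expression} — in that very case one invokes Isserlis' Theorem, i.e., one uses the \emph{Gaussian} presumed model, so the fourth moment is replaced by the sum of products of second moments. Thus, precisely because the estimator is the \emph{Gaussian} QMLE, the relevant fourth-order moments entering $\C_{\hat{q}_{(I)}}$ are evaluated under the Gaussian presumption and reduce to the SOS of the sources, independently of the sources' true cumulants. I would make this explicit by pointing back to Appendices \ref{appendix_a} and \ref{appendix_b}, noting that no term other than a second moment of the sources survives.

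\emph{Step 2: The gradient vectors $\ug_{ij}^{(m)}$ depend only on the target-matrices evaluated at $\bA=\I_{KM}$.} By \eqref{derivative_solution}, $\utheta = -\H^{-1}\uy$, and by the definitions \eqref{definition_of_Y_and_H}, \eqref{first_term_mth_matrix}, \eqref{second_term_mth_matrix} both $\H$ and $\uy$ are built entirely from the extended SeDJoCo solution matrices $\left\{\B^{(m)}\right\}$ and the (deterministic, limiting) target-matrices $\left\{\Q_k^{(m_1,m_2)}\right\}$ of \eqref{target_matrices_def2}. When $\bA=\I_{KM}$, the solution is $\left\{\B^{(m)}=\I_K\right\}$ and, by \eqref{valuesofQelements}, each $\Q_k^{(m_1,m_2)}$ is a diagonal matrix whose entries are $\frac{1}{T}\Tr(\C_i^{(m_2,m_1)}\P_k^{(m_1,m_2)})$ — again a function of the covariances and the presumed inverse blocks only. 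Hence $\H$, $\uy$, and therefore every $\utheta_r$ and every row $\ug_{ij}^{(m)}$ of $\G$ in \eqref{GradientMatrix}, depend on $\left\{\C_k^{(m_1,m_2)}\right\}$ and $\left\{\P_k^{(m_1,m_2)}\right\}$ and on nothing else about the sources' distributions.

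\emph{Conclusion.} Substituting the results of Steps 1 and 2 into \eqref{ISRintermsofCq}, the right-hand side $\ug_{ij}^{(m)\rm T}\C_{\hat{q}_{(I)}}\ug_{ij}^{(m)}$ is a function solely of $\left\{\C_k^{(m_1,m_2)}\right\}$ and $\left\{\P_k^{(m_1,m_2)}\right\}$, which — via the ISR-equivariance property established in \eqref{QMLglobalmatrices}–\eqref{QMLglobalmatriceslast} — gives the asymptotic ISR for \emph{any} mixing matrices and \emph{any} source distributions consistent with the given SOS. I expect the main obstacle to be the diagonal-block fourth-moment term of Step 1: one must argue carefully that, even though the \emph{true} sources are non-Gaussian, the QMLE is constructed by differentiating a Gaussian likelihood, so the moment identity actually used in forming the estimating equations (and hence in the perturbation analysis) is the Gaussian one — making the substitution legitimate and the resulting ISR genuinely ``universal'' in the claimed sense. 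The remaining steps are essentially bookkeeping over the closed-form expressions already derived.
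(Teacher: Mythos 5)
There is a genuine gap, and it sits exactly where you yourself flagged the ``main obstacle'': the diagonal-block ($i_1=j_1=i_2=j_2$) entries of $\C_{\hat{q}_{(I)}}$. Your Step~1 argues that because the estimator is built from a Gaussian quasi-likelihood, the fourth moment appearing in \eqref{secondtermofCq} may be ``evaluated under the Gaussian presumption'' via Isserlis' Theorem. This is not legitimate: $\C_{\hat{q}_{(I)}}$ is the covariance of the \emph{actual} empirical statistics $\huq$ under the \emph{true} source distribution, so its entries with $i_1=j_1=i_2=j_2$ contain a genuine fourth-order joint cumulant of the corresponding SCV, ${\rm cum}\big(s_{i_1}^{(m_1)}[t_1],s_{i_1}^{(n_1)}[t_2],s_{i_1}^{(m_2)}[t_3],s_{i_1}^{(n_2)}[t_4]\big)$, which vanishes only for Gaussian sources. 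The presumed model enters the construction of the estimator, not the statistics of the data; no choice of quasi-likelihood can make the true variance of $\widehat{Q}_{k,ii}^{(m_1,n_1)}$ independent of the sources' kurtosis. Hence your claim that $\C_{\hat{q}_{(I)}}$ itself depends only on SOS is false for non-Gaussian sources, and the theorem does not follow from your Steps 1--2 as written. (Your Step~2 --- that the gradient vectors are deterministic functions of $\{\C_k^{(m_1,m_2)}\}$ and $\{\P_k^{(m_1,m_2)}\}$ only --- is correct and is also part of the paper's argument.)

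The missing idea, which is the entire content of the paper's Appendix \ref{appendix_b}, is that these fourth-cumulant-bearing entries are \emph{annihilated} in the quadratic form $\ug_{ij}^{(m)\rm{T}}\C_{\hat{q}_{(I)}}\ug_{ij}^{(m)}$ rather than being harmless in themselves. The offending entries of $\C_{\hat{q}_{(I)}}$ are precisely those indexed by pairs of \emph{diagonal} target-matrix elements $\widehat{Q}_{k,ii}^{(m_1,n_1)}$ (only in that case do all four source vectors in \eqref{secondtermofCq} come from a single SCV; your Cases I--III factor across independent SCVs into products of second moments and are genuinely SOS-only). The paper then solves \eqref{derivative_solution} explicitly at $\bA=\I_{KM}$ and shows that the resulting $\utheta$ is supported only on the diagonal positions of the $\B^{(\ell)}$ matrices, i.e., $dB_{pq}^{(m)}/d\widehat{Q}_{k,ii}^{(m_1,n_1)}=0$ whenever $p\neq q$. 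Consequently the components of $\ug_{ij}^{(m)}$ (for $i\neq j$) that would multiply the fourth-order-dependent entries of $\C_{\hat{q}_{(I)}}$ are all zero, and the asymptotic ISR is determined by the remaining, SOS-only entries. To repair your proof you would need to add this vanishing-gradient lemma; without it, the expression \eqref{ISRintermsofCq} combined with \eqref{CovairanceofQ_expression} is only valid for truly Gaussian sources and the ``universality'' claim is unproven.
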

		\begin{proof}
			See Appendix \ref{appendix_b}.
			\vspace{-0.2cm}
			\begin{align}
				\vspace{-0.5cm}
				\tag*{$\blacksquare$}
			\end{align}
	\end{proof}}
	\vspace{-0.2cm}
	Thus, the \delra{result we have obtained for the}\addra{(approximated) closed-form} ISR \addra{expressions we have obtained} hold\delra{s} not only when the sources are Gaussian, but also for \textit{any} distribution of the sources. In addition, we emphasize that the matrices $\left\{\P_k^{(m_1,m_2)}\right\}$ in \eqref{CovairanceofQ_expression} need not be the actual blocks obtained by the true covariance matrices, defined in \eqref{invereseblockmatrices}, but rather can be \textit{arbitrarily} chosen matrices (\delra{with}\addra{under} the \delra{restrictions and }conditions stated \delra{and assumed }in \delra{Theorem}\addra{Lemma} 1) used to construct the target-matrices for the extended SeDJoCo equations. 
	
	\addra{Furthermore, this Theorem yields yet another important and informative result regarding the performance of the Gaussian QMLE, which is given as follows. By virtue of Theorem \ref{theorem1}, the asymptotic ISR attained by the Gaussian QMLE is determined only by the true covariances of the sources $\left\{\C_k^{(m_1,m_2)}\right\}$, and by the matrices $\left\{\P_k^{(m_1,m_2)}\right\}$ used for the construction of the target-matrices. Hence, the resulting asymptotic ISRs attained by the Gaussian QMLE are affected only by the mismodeling error introduced by the deviation of the presumed $\left\{\P_k^{(m_1,m_2)}\right\}$ matrices from the actual blocks obtained by the true covariances matrices. But in the ``best" scenario, when the sources' SOS are actually known \textit{a-priori}, the predicted ISRs coincide with the Gaussian iCRLB (\cite{weiss2017the}) \textit{regardless} of the true sources' distributions. Therefore, we conclude that the Gaussian iCRLB serves as an asymptotically attainable lower bound on the resulting ISR of the Gaussian QMLE.}
	
	To summarize, the expressions we have obtained for the resulting (approximated) ISRs\delra{ when using}\addra{ attained by} the extended SeDJoCo \addra{(QML) }solution\delra{ as a set of separating matrices} hold for a general IVA problem, not necessarily Gaussian and not necessarily semi-blind (as long as the mild conditions stated in \delra{Theorem}\addra{Lemma} 1 hold).\addra{ Moreover, the Gaussian iCRLB serves as a lower bound (asymptotically attainable) on the resulting ISR of the Gaussian QMLE, regardless of the true sources' distributions.}

\section{Simulation Results}
\label{sec_Simulation}
In order to corroborate our analytical results, we present simulations results of \delra{two}\addra{three} experiments. In these experiments, we compare analytically predicted results, based on our ``small-errors" assumption, with empirical results. Furthermore, the setup of these\delra{s} experiments describes more realistic semi-blind scenarios, which are more suitable for modeling ``real-life" problems, where some \textit{a-priori} information is available (or assumed), but is likely to be inaccurate\addra{ (or false)}.

In \delra{both}\addra{all the} experiments we consider (\addra{according to }model \eqref{IVA_model}) mixtures of $M$ datasets, each with $K$ stationary sources\delra{ and a sample size of $T\delra{=1000}$}. The $k$-th source of the $m$-th dataset, $s_k^{(m)}[t]$, is generated as
\begin{multline} \label{source_definition_simul}
	s_k^{(m)}[t] = \sum_{\ell=1}^{M}{w_k^{(\ell)}[t]\ast h_k^{(m,\ell)}[t] },\\
	\forall k\in\{1,\ldots,K\}, \forall m\in\{1,\ldots,M\},
\end{multline}
where $\left\{w_k^{(m)}[t]\right\}$ are all mutually independent white noise processes, $\left\{h_k^{(m_1,m_2)}[t]\right\}$ are Finite Impulse Response (FIR) filters of length $L$ for which
\begin{equation} \label{FIR_energy}
	\sum_{t=0}^{L-1}\left|h_k^{(m_1,m_2)}[t]\right|^2 = \begin{cases}
		1, & m_1=m_2 \\
		\eta, & m_1\neq m_2 \\
	\end{cases},
\end{equation}
so that $\eta$ is a parameter which controls the ``relative energy" contained in the cross-spectra between corresponding sources from different datasets\delra{, and $\ast$ denotes the convolution operator}. As can be seen from \eqref{source_definition_simul}, $h_k^{(m_1,m_2)}[t]$ is the FIR filter applied to the $m_2$-th white driving-noise in order to generate a component of the $k$-th source in the $m_1$-th dataset. The cross-spectrum between any pair $\left\{s_k^{(m_1)}[t],s_k^{(m_2)}[t]:m_1 \neq m_2\right\}$ is obviously non-zero when $\eta>0$. In each of the \delra{two}\addra{three} experiments, the solution matrices were obtained using Newton's method for an iterative solution of the extended SeDJoCo \cite{weiss2017the}. In order to focus on the small errors and avoid convergence to false local maxima (which are out of scope of our analysis), we initialized the iterative solution with the true demixing matrices $\left\{\B^{(m)}\right\}_{m=1}^M$. Note that this choice guarantees (with probability 1) that the initial solution is not the expected solution $\left\{\B^{(m)}+\meps_B^{(m)}\right\}_{m=1}^M$, which stems from our analysis (since for every finite sample size $T$ the ISRs are strictly positive). We shall present our results in terms of the total normalized ISR, defined as
\begin{equation} \label{JISR_defenition}
	\text{ISR}_{\text{norm}} \triangleq \frac{1}{MK(K-1)}\sum_{m=1}^{M}{\sum_{\substack{i,j=1 \\ i \neq j}}^{K}{\text{ISR}_{ij}^{(m)}}}.
\end{equation}
\delra{All empirical results were obtained by averaging $10000$ independent trials.}
\vspace{-0.7cm}
\subsection{Gaussian Sources with Inaccurate Covariance Matrices}\label{inaccuratecovariances}
\begin{figure}
	\centering
	\includegraphics[width=0.5\textwidth]{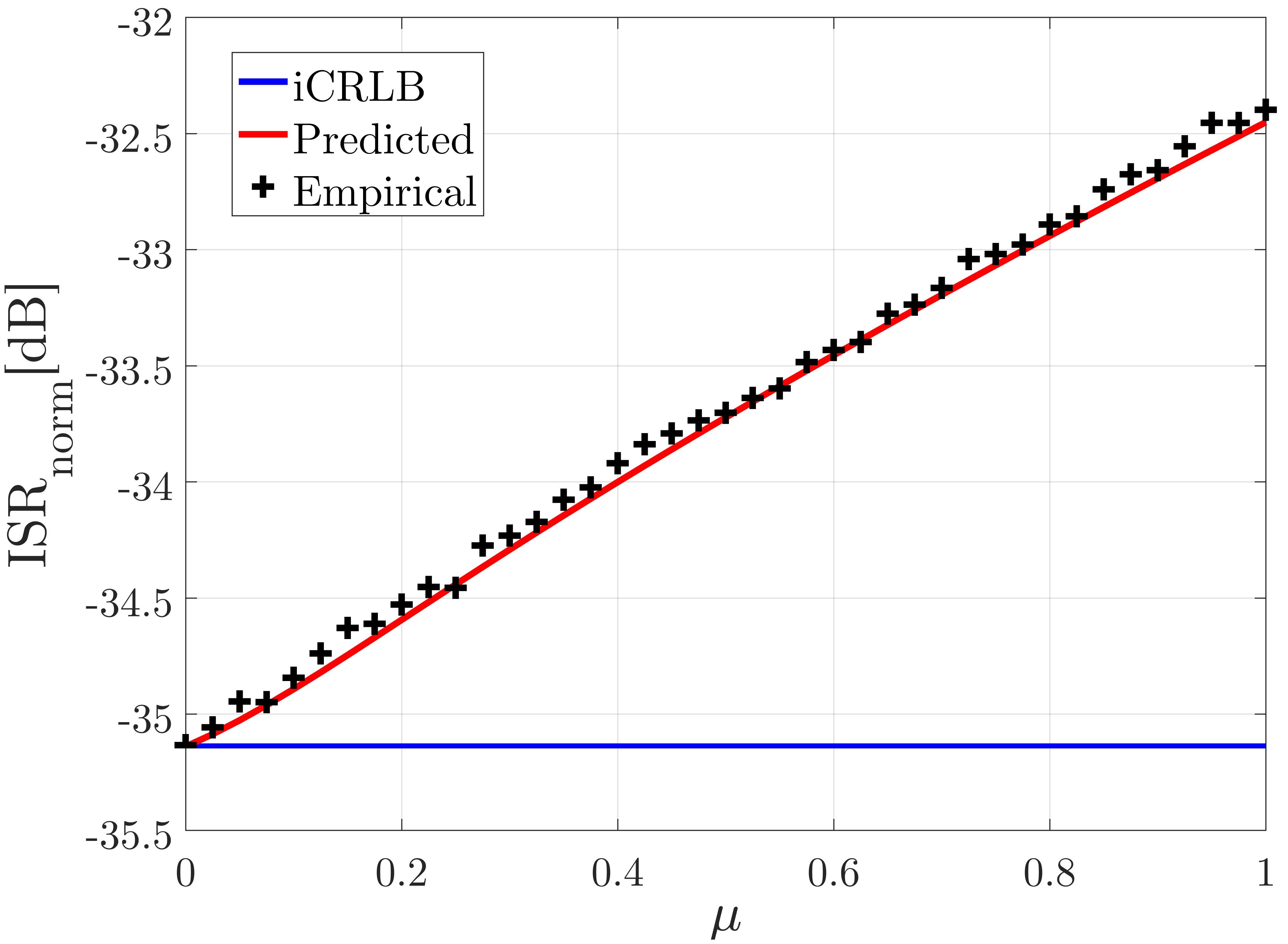}
	\caption{Analytically predicted and empirical total normalized ISR vs. $\mu$, representing the deviation from the model, i.e., the true zeros of the FIR filters. An excellent match is evident, where the largest difference between the \addra{predicted} and empirical values is $\sim0.1$[dB].\addra{ The empirical results were obtained by averaging $10^4$ independent trials.}}
	\label{fig1simul1}
\end{figure}
\begin{figure}
	\centering
	\includegraphics[width=0.5\textwidth]{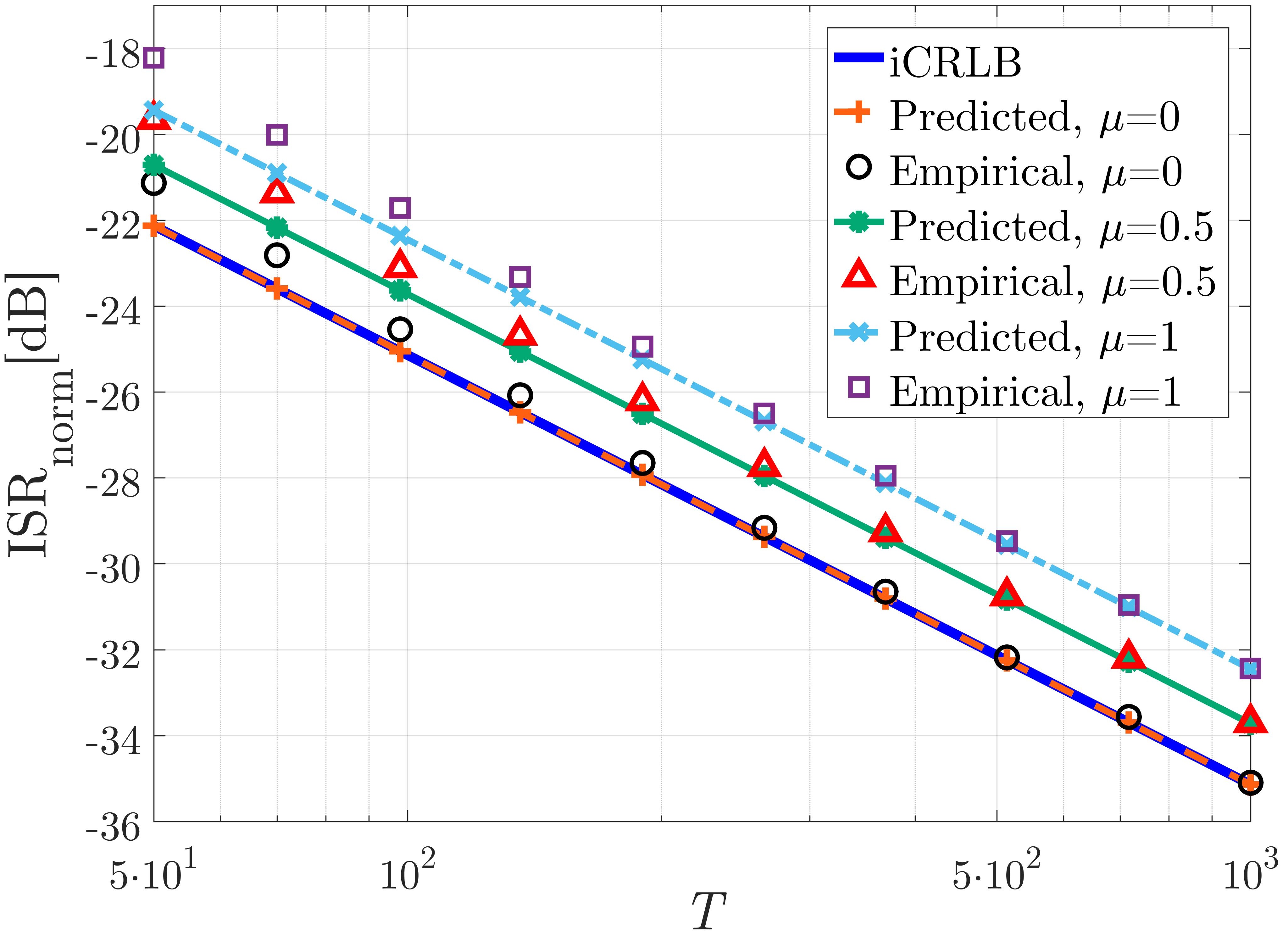}
	\caption{Analytically predicted and empirical total normalized ISR vs. $T$. It can be seen that the predicted ISR values, attained using the ``first-order" approximation based on the ``small-errors" assumption, are very close to the empirical values\delra{ obtained by the simulation.}\addra{, obtained by averaging $10^4$ independent trials.}} 
	\label{fig2simul1}
	\vspace{-0.3cm}
\end{figure}
\begin{figure*}[h]
	\includegraphics[width=\textwidth]{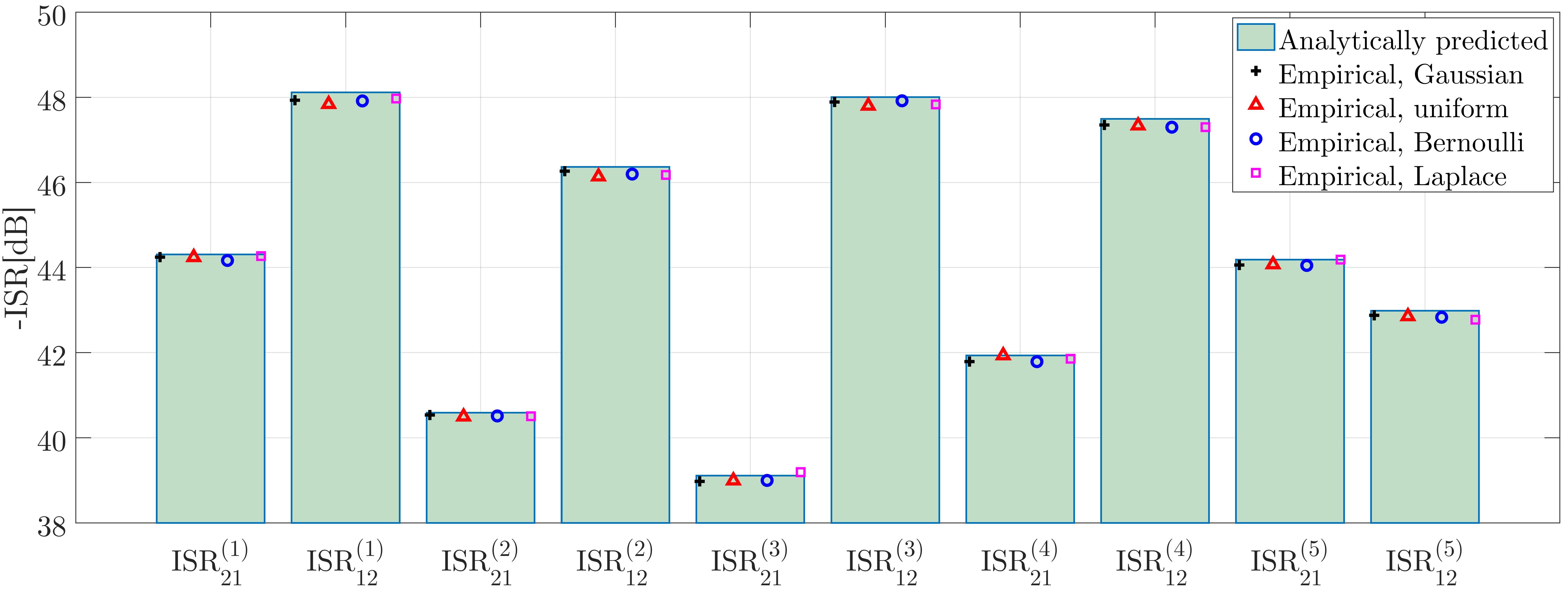}
	\caption{Analytically predicted and empirical (minus) ISR values for the Gaussian, uniform, Bernoulli and Laplace distributions. Clearly, the simulation results reflect a proper match between the theoretical predication and the empirical results \textit{regardless} of the distribution - the predicted ISR values are within $\sim0.2$[dB] from the empirical values\delra{.}\addra{, obtained by averaging $10^4$ independent trials.}}
	\label{fig2simul2}
\end{figure*}
In the first experiment we consider two datasets ($M=2$), each with three Gaussian sources ($K=3$). The white driving-noises $\left\{w_k^{(m)}[t]\right\}$ are all standard Gaussian processes, i.e., with zero-mean and unit variance. In this experiment, we assume that only some inaccurate versions $\wtC_k^{(m_1,m_2)}$ of the sources' true temporal covariance matrices  $\C_k^{(m_1,m_2)}$ are available, so the target-matrices $\hQ_k^{(m_1,m_2)}$ in \eqref{target_matrices_def} are computed using the inaccurate $\wtP_k^{(m_1,m_2)}$ matrices implied by the inaccurate $\wtC_k^{(m_1,m_2)}$ matrices via \eqref{SCM_def}, \eqref{invereseblockmatrices}. The covariance information is given, equivalently, in the form of inaccurate versions $\left\{\widetilde{h}_k^{(m_1,m_2)}[t]\right\}$ of the true FIR filters $\left\{h_k^{(m_1,m_2)}[t]\right\}$.

The true FIR filters (of length $L=10$) are generated in the following manner. For each FIR filter (i.e., for each triplet $\{k,m_1,m_2\}$) we draw one real-valued zero and four complex-valued zeros, all inside the unit-circle  (the remaining four zeros are the conjugate of the complex-valued ones, since the FIRs are real-valued). For each zero, the phase is drawn (independently) from $U(0,\pi)$ and the radius is drawn (also independently) from the distribution of the random variable $\exp\left(-au\right)$, where $a$ is a ``small" real-valued fixed parameter and $u\sim U(0,1)$ (the phase of the real-valued zero is fixed at zero). Denote these zeros as $\left\{z_{0,k}^{(m_1,m_2)}\right\}$. For each FIR we calculate its coefficients from its zeros, and then normalize its energy to agree with \eqref{FIR_energy}\addra{, and we set $\eta=1$}.

Now, to generate the ``inaccurate" version of each FIR we define the ``erroneous zeros", denoted by $z_{1,k}^{(m_1,m_2)}$, taking $z_{0,k}^{(m_1,m_2)}$ with their phases perturbed by a zero-mean Gaussian noise with variance $b^2$ and their radii attenuated by $c$, where $b,c$ are \delra{``small"}\addra{fixed} real-valued constants (such that the \addra{erroneous }zeros still lie inside the unit-circle). Now, we assume that the zeros of the inaccurate versions of the FIRs are given in the form of $\widetilde{z}_{k}^{(m_1,m_2)}=\mu z_{1,k}^{(m_1,m_2)}+(1-\mu)z_{0,k}^{(m_1,m_2)}$ (for each triplet $\{k,m_1,m_2\}$), where $0\leq \mu \leq 1$, from which we construct the inaccurate FIRs' coefficients (normalized to agree with \eqref{FIR_energy}), leading to the resulting inaccurate (``presumed") covariance matrices $\wtC_k^{(m_1,m_2)}$ and the ensuing target-matrices. For this experiment, we have used $a=2\addra{, b=0.1[\text{rad}]}$ and $\delra{b=}c=0.1$. As can be seen from Fig. \ref{fig1simul1}, which presents the total normalized ISR vs. $\mu$ (representing the ``amount" of estimation noise in the FIRs), there is an excellent match between the analytically predicted and the empirical results, so that the largest deviation is $\sim0.1$[dB] \footnote{This match was also verified for every ISR element independently}. As expected, when $\mu=0$ the iCRLB is attained, since the prior information on the model (i.e., the covariance matrices and the distribution of the sources) is accurate and the ML\addra{E}\delra{ estimate} is asymptotically efficient. However, as $\mu$ increases, the inaccuracy in the ``presumed" FIRs \delra{becomes more dominant, which in turn }leads to increasingly inaccurate versions of the presumed covariance matrices, thus the resulting ISR departs from its iCRLB, since the extended SeDJoCo solution for the noisy target-matrices is no longer the (exact) ML\addra{E}\delra{ estimate}, but only a QML\addra{E}\delra{ estimate}. Nevertheless, the performance is still accurately predicted by our analysis.

Fig. \ref{fig2simul1} presents the the total normalized ISR vs. $T$, the sample size, for three different values of $\mu$. Clearly, the ``first-order" approximation, assuming the validity of the ``small-errors" analysis, gives quite accurate predictions of the resulting ISRs even for very small sample sizes (e.g., for $T=50$ the empirical ISR values are within $\sim1$[dB] from the predicted values). Obviously, as $T$ grows, i.e., in the asymptotic regime, the approximation becomes even more accurate.
\vspace{-0.2cm}
\subsection{Sources' Distribution Mismodeling}\label{distributionmismodeling}
Our second experiment concerns the modeling, or more precisely, mismodeling of the sources' distributions. Indeed, due to the nature of the IVA problem, an inherent uncertainty is the one relating to the sources' distributions. In a semi-blind scenario, where prior knowledge is given or assumed, it is more realistic that only partial information on the statistical properties of the sources (e.g., SOS) is given rather than the entire distribution. Hence, in this experiment we assume perfect knowledge of the covariance matrices of the sources, but no information at all regarding the distributions of the sources. We consider five datasets ($M=5$), each with two sources ($K=2$). The white driving-noises $\left\{w_k^{(m)}[t]\right\}$ are drawn from four different distribution for each test case, where we examine the Gaussian, uniform, Bernoulli and Laplace distributions. Recall that for all cases, the white noises are zero-mean and unit variance, which uniquely defines all the parameters of all these distributions (aside from the Bernoulli distribution, which is traditionally defined for the values $0$ and $1$, and was defined here for the values $-1$ and $1$ with equal probability). The white driving-noises are then filtered by FIR filters of length $L=4$, whose coefficients are drawn from the standard Gaussian distribution. Notice that now we chose $L$ to be relatively small in order to keep the sources' marginal distributions away from the Gaussian distribution (as $L$ grows, the marginal distributions of the sources become ``more" Gaussian due to the central limit theorem). We also decrease the cross-correlation between corresponding sources from different datasets by setting $\eta=0.5$. Then, we obtain the extended SeDJoCo equations solution, which is a consistent estimate (according to the arguments presented in Section \ref{sec_ProblemFormulation}, Subsection \ref{properties_of_e_sedjoco}) but definitely not the ML\addra{E}\delra{ estimate} for non-Gaussian sources. Fig. \ref{fig2simul2} shows (for all the ISR elements, from all datasets) the analytically predicted values evaluated by \eqref{ISRintermsofCq}, and the empirical values for each test case obtained by simulation. It is readily seen that all predicted values well-approximate the true empirical values, to a precision of $\sim0.2$[dB], for each ISR element, and for all the distributions. We note in passing that the distributions for this experiment were chosen as representatives of both platykurtic (e.g., uniform) and leptokurtic (e.g., Laplace) distributions, which produce less and more extreme outliers, respectively, than does the Gaussian distribution. Furthermore, we emphasize that in this experiment the covariance matrices were assumed to be perfectly known only in order to isolate the sources' distributions mismodeling effect. Of course, our results are also valid for cases where both factors - inaccurate covariance matrices and sources' distributions mismodeling - are considered\addra{, as demonstrated in the following experiment}.
\vspace{-0.2cm}
\addra{\subsection{General Mismodeling and a Comparison to IVA-G-N}
	In our third experiment we consider a scenario \delra{with general mismodeling, }where both mismodeling factors, which were separately examined in the first two experiments, are introduced simultaneously. In addition, we compare the separation performance obtained by the Gaussian QMLE vs. Anderson {\it et al.}'s \cite{anderson2012joint} Newton updates for Gaussian IVA (IVA-G-N). We stress that the IVA-G-N is intended for separation of temporally independent identically distributed (i.i.d.) Gaussian sources, which is only a particular (possible) case for the Gaussian QML presumption. Nonetheless, it is our purpose in this work to show how temporal diversity of the sources can be exploited for better separation (even without precise knowledge of the actual covariance structures governing this diversity).
	We consider $M=8$ datasets, each with $K=5$ Gaussian mixture sources. Here, the $k$-th source of the $m$-th dataset is defined as
	\begin{multline}
		s_k^{(m)}[t] = I_k^{(m)}[t]\cdot s_{k,a}^{(m)}[t]+\left(1-I_k^{(m)}[t]\right)\cdot s_{k,b}^{(m)}[t] \\
		\forall k\in\{1,\ldots,K\}, \forall m\in\{1,\ldots,M\},
	\end{multline}
	where the ``switch indices" $\left\{I_k^{(m)}[t]\in\{0,1\}\right\}$ are all mutually statistically independent i.i.d. Bernoulli processes, with a common (known) probability of ``success" $\text{Pr}\left(I_k^{(m)}[t]=1\right)\triangleq p\in(0,1)$, statistically independent of the sets $\left\{s_{k,a}^{(m)}[t]\right\}$ and $\left\{s_{k,b}^{(m)}[t]\right\}$, which were generated (according to the general description given in the beginning of this section) by the mutually statistically independent sets of white Gaussian driving-noises $\left\{w_{k,a}^{(m)}[t]\right\}$ and $\left\{w_{k,b}^{(m)}[t]\right\}$, and the sets of FIR filters $\left\{h_{k,a}^{(m_1,m_2)}[t]\right\}$ and $\left\{h_{k,b}^{(m_1,m_2)}[t]\right\}$, respectively. This time, we set $L=10, \eta=0.1$ (yielding weaker correlations between sources from different sets than in the two previous experiments) and $\mu=0.5$. For the QMLE we used only noisy versions of the presumed covariance matrices ensuing from the erroneous versions of the FIRs $\left\{\widetilde{h}_{k,a}^{(m_1,m_2)}[t]\right\}$ and $\left\{\widetilde{h}_{k,b}^{(m_1,m_2)}[t]\right\}$ (given implicitly by the sets of their corresponding erroneous zeros $\left\{\widetilde{z}_{k,a}^{(m_1,m_2)}\right\}$ and $\left\{\widetilde{z}_{k,b}^{(m_1,m_2)}\right\}$, respectively). The sample size was set to $T=10^4$, so that in this scenario we also investigate the proposed algorithm in a considerably larger-scale estimation problem, including $MK^2=200$ unknown parameters and $MKT=400,000$ available samples.
	
	Fig. 4 presents the total normalized ISR vs. $p$, which is \textit{de facto} the Gaussian mixture characterizing parameter quantifying (in this case) the non-Gaussianity of the sources. At $p=0$ and $p=1$, the sources are (truly) Gaussian, such that the modeling error is only due to the inaccuracy in the presumed covariance matrices. However, as $p$ departs from the edges towards $p=0.5$, the sources' distributions depart from Gaussianity towards the ``most" non-Gaussian form of this parametric distribution - an equiprobable two Gaussians mixture. As expected, it can be seen that even with both mismodeling errors, the ``first-order" approximation gives an accurate prediction of the resulting ISR. Moreover, the Gaussian QMLE, attaining its lower bound, exhibits superiority over the IVA-G-N for every value of $p$ in this scenario, which demonstrates its ability to exploit temporal diversity even when the cross-correlations are relatively low.
	\begin{figure}
		\centering
		\includegraphics[width=0.5\textwidth]{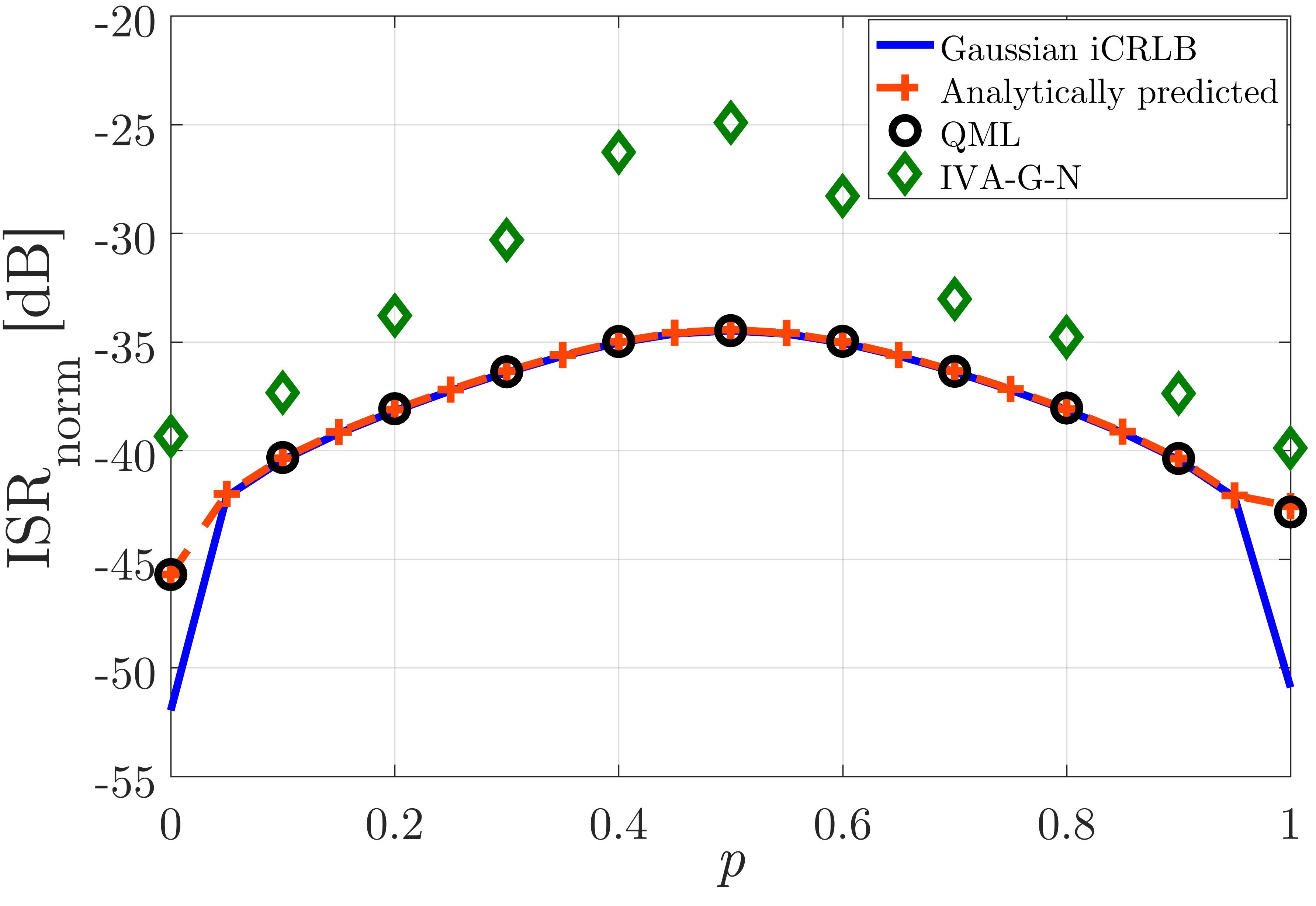}
		\caption{\addra{Analytically predicted and empirical total normalized ISR vs. $p$, representing deviation from Gaussianity. For a large sample size ($2000$ [sample/estimated parameter]), the approximated prediction is seen to be very accurate. Additionally, the Gaussian QMLE is superior to the IVA-G-N for every $p$. Empirical results were obtained by averaging $10^3$ independent trials.}}
		\label{fig1simul3}
		\vspace{-0.5cm}
	\end{figure}
	\captionsetup[figure]{labelfont={color=black}}
}

\section{Conclusion}
\label{sec_Conclusion} We considered a separation procedure for a semi-blind IVA problem using the extended SeDJoCo solution. We showed that (under mild conditions) this solution is consistent (in the sense of zero ISR when $T\rightarrow\infty$) for a general IVA problem, i.e., potentially fully blind. In addition, using the classical ``first-order" perturbation analysis (under the ``small-errors" assumption), we provided closed-form analytical expressions for the approximated predicted ISRs. \addra{Furthermore, we proved the ``universal" asymptotic ISR Theorem for the Gaussian QMLE and showed that the Gaussian iCRLB serves as an asymptotically attainable lower bound for the resulting ISR, regardless of the true sources' distributions, as an immediate implicit result of the Theorem. }Our analytical results were verified by simulations, demonstrating an excellent match with the obtained empirical outcomes (to a precision of $\sim0.2$[dB] for a sample size of $10^3$) for two types of partial prior knowledge - imperfect SOS and sources' distributions mismodeling.
\section{Acknowledgment}
\label{sec_acknowledgment} The authors gratefully acknowledge the financial support by the German-Israeli Foundation (GIF), grant number I-1282-406.10/2014. The first author also wishes to thank the Yitzhak and Chaya Weinstein Research Institute for Signal Processing for a fellowship.
\appendices
\addra{\section{Partial Differentiation of the Matrix $\F$}\label{appendix_RQ}
Differentiating each block $\F_m$ w.r.t. $Q_{k,ij}^{(m_1,m_2)}$ yields
\begin{align}\label{first_term_mth_matrix_appendix}
\begin{split}
&\frac{\partial\F_m}{\partial Q_{k,ij}^{(m_1,m_2)}} = \sum_{\tilde{k}=1}^{K}\sum_{\ell=1}^{M}\E_{\tilde{k}\tilde{k}}{\B^{(\ell)}\frac{\partial\Q_{\tilde{k}}^{(\ell,m)}}{\partial Q_{k,ij}^{(m_1,m_2)}}{\B^{(m)\rm{T}}}}=\\
&\delta_{mm_2}\E_{kk}\tb_i^{(m_1)}\tb_j^{(m_2)\rm{T}}+\\
&\left\{\begin{array}{ll}
\delta_{mm_1}\E_{kk}\tb_j^{(m_2)}\tb_i^{(m_1)\rm{T}} & \mbox{$m_1\neq m_2$}\\
\delta_{mm_2}\E_{kk}\tb_j^{(m_2)}\tb_i^{(m_1)\rm{T}} & \mbox{$m_1=m_2,i\neq j$}\\
0 & \mbox{$m_1=m_2,i=j$} \end{array} \right.,
\end{split}
\end{align}
where $\tb_i^{(m)}$ denotes the $i$-th column of $\B^{(m)}$. Then, in the same fashion, differentiating each block $\F_n$ w.r.t. $B_{pq}^{(m)}$ gives
\begin{align}\label{second_term_mth_matrix_appendix}
\begin{split}
&\frac{\partial\F_n}{\partial B_{pq}^{(m)}} = \sum_{k=1}^{K}\sum_{\ell=1}^{M}\E_{kk}{\frac{\partial}{\partial B_{pq}^{(m)}}}\left(\B^{(\ell)}\Q_{k}^{(\ell,n)}\B^{(n)\rm{T}}\right)=\\
&\sum_{k=1}^{K}\E_{kk}\Bigg[\delta_{mn}\left(\B^{(m)}\Q_k^{(m,m)}\E_{qp}+\E_{pq}\Q_k^{(m,m)}+\right.\\
&\sum_{\substack{\ell=1\\\ell\neq m}}^{M}\left.\B^{(\ell)}\Q_k^{(\ell,m)}\E_{qp}\right)+\left(1-\delta_{mn}\right)\left(\E_{pq}\Q_k^{(m,n)}\B^{(n)\rm{T}}\right)\Bigg]=\\
&\sum_{k=1}^{K}\E_{kk}\left(\E_{pq}\Q_k^{(m,n)}\B^{(n)\rm{T}}+\delta_{mn}\sum_{\ell=1}^{M}\B^{(\ell)}\Q_k^{(\ell,m)}\E_{qp}\right),\\
&\E_{pq}\Q_p^{(m,n)}\B^{(n)\rm{T}}+\delta_{mn}\sum_{k=1}^{K}\sum_{\ell=1}^{M}\E_{kk}\B^{(\ell)}\Q_k^{(\ell,m)}\E_{qp},
\end{split}
\end{align}
where we have used $\E_{kk}\E_{pq}=\delta_{kp}\E_{pq}$ and (\cite{petersen2008matrix})
\begin{align}
&\frac{\partial\mX\B\mX^{\rm{T}}}{\partial X_{ij}}=\mX\B\E_{ji}+\E_{ij}\B\mX^{\rm{T}}, \;\;\frac{\partial\mX\A}{\partial X_{ij}}=\E_{ij}\A. \nonumber
\end{align}
}
\vspace{-0.5cm}
\section{Computation of the Elements of $E\left[\huq\huq^{\rm{T}}\right]$} \label{appendix_a}
Let us examine all different cases of the expression
\begin{multline*}\label{secondtermofCq_appendix}
  \left(E\left[\huq\huq^{\rm{T}}\right]\right)_{\tilde{i}\tilde{j}}\triangleq \mLambda_{\tilde{i}\tilde{j}}=\\
  \frac{1}{T^2}E\left[\us_{i_1}^{(m_1)\rm{T}}\P_{k_1}^{(m_1,n_1)}\us_{j_1}^{(n_1)}\us_{i_2}^{(m_2)\rm{T}}\P_{k_2}^{(m_2,n_2)}\us_{j_2}^{(n_2)}\right].
\end{multline*}
\underline{Case I:} $i_1=j_1, i_2=j_2, i_1\neq i_2$
\begin{equation*}
    \mLambda_{\tilde{i}\tilde{j}}=\frac{1}{T^2}\Tr\left(\C_{i_1}^{(n_1,m_1)}\P_{k_1}^{(m_1,n_1)}\right)\Tr\left(\C_{i_2}^{(n_2,m_2)}\P_{k_2}^{(m_2,n_2)}\right),
\end{equation*}
where we have used
\begin{align*}\label{appendixAeq1}
  &(\text{i})\;\;E\left[\us_{i_1}^{(m_1)\rm{T}}\P_{k_1}^{(m_1,n_1)}\us_{i_1}^{(n_1)}\us_{i_2}^{(m_2)\rm{T}}\P_{k_2}^{(m_2,n_2)}\us_{i_2}^{(n_2)}\right]=\\
  &\;\;\;\;\;\;E\left[\us_{i_1}^{(m_1)\rm{T}}\P_{k_1}^{(m_1,n_1)}\us_{i_1}^{(n_1)}\right]E\left[\us_{i_2}^{(m_2)\rm{T}}\P_{k_2}^{(m_2,n_2)}\us_{i_2}^{(n_2)}\right];\\
  &(\text{ii})\;\;\us_{i_1}^{(m_1)\rm{T}}\P_{k_1}^{(m_1,n_1)}\us_{j_1}^{(n_1)}=\Tr\left(\us_{j_1}^{(n_1)}\us_{i_1}^{(m_1)\rm{T}}\P_{k_1}^{(m_1,n_1)}\right); \text{and}\\
  &(\text{iii})\;\; E\left[\Tr\left(\cdot\right)\right]=\Tr\left(E\left[\cdot\right]\right).
\end{align*}
\underline{Case II:} $i_1=j_2, i_2=j_1, i_1\neq j_1$
\begin{equation*}
    \mLambda_{\tilde{i}\tilde{j}}=\frac{1}{T^2}\Tr\left(\C_{i_1}^{(n_2,m_1)}\P_{k_1}^{(m_1,n_1)}\C_{i_2}^{(n_1,m_2)}\P_{k_2}^{(m_2,n_2)}\right),
\end{equation*}
where we have used $(\text{i})$-$(\text{iii})$ and also
\begin{align*}\label{appendixAeq2}
  &(\text{iv})\;\;\us_{i_1}^{(m_1)\rm{T}}\P_{k_1}^{(m_1,n_1)}\us_{j_1}^{(n_1)}=\us_{j_1}^{(n_1)\rm{T}}\P_{k_1}^{(n_1,m_1)}\us_{i_1}^{(m_1)}; \text{and}\\
  &(\text{v})\;\;\Tr\left(\A\B\C\D\right)=\Tr\left(\D\A\B\C\right), \A,\B,\C,\D\in\Rset^{K\times K}.
\end{align*}
\underline{Case III:} $i_1=i_2, j_1=j_2, i_1\neq j_1$
\begin{equation*}
    \mLambda_{\tilde{i}\tilde{j}}=\frac{1}{T^2}\Tr\left(\C_{i_1}^{(m_2,m_1)}\P_{k_1}^{(m_1,n_1)}\C_{j_1}^{(n_1,n_2)}\P_{k_2}^{(n_2,m_2)}\right),
\end{equation*}
where we have used $(\text{i})$-$(\text{v})$.

\noindent\underline{Case IV:} $i_1=j_1=i_2=j_2$
\begin{align*}
    \mLambda_{\tilde{i}\tilde{j}}=\frac{1}{T^2}\bigg [ &\Tr\left(\C_{i_1}^{(n_1,m_1)}\P_{k_1}^{(m_1,n_1)}\right)\Tr\left(\C_{i_1}^{(n_2,m_2)}\P_{k_2}^{(m_2,n_2)}\right)+\\
    &\Tr\left(\C_{i_1}^{(n_2,m_1)}\P_{k_1}^{(m_1,n_1)}\C_{i_1}^{(n_1,m_2)}\P_{k_2}^{(m_2,n_2)}\right)+\\
    &\Tr\left(\C_{i_1}^{(m_2,m_1)}\P_{k_1}^{(m_1,n_1)}\C_{j_1}^{(n_1,n_2)}\P_{k_2}^{(n_2,m_2)}\right)\bigg ],
\end{align*}
where we have used $(\text{i})$-$(\text{v})$ and Isserlis' theorem \cite{isserlis1918formula}, regarding the computation of higher-order moments of the multivariate Gaussian distribution, and in particular,
\begin{align*}
  &E\left[s_{i_1}^{(m_1)}[t_1]s_{i_1}^{(n_1)}[t_2]s_{i_1}^{(m_2)}[t_3]s_{i_1}^{(n_2)}[t_4]\right]=\\
  &E\left[s_{i_1}^{(m_1)}[t_1]s_{i_1}^{(n_1)}[t_2]\right]E\left[s_{i_1}^{(m_2)}[t_3]s_{i_1}^{(n_2)}[t_4]\right]+\\
  &E\left[s_{i_1}^{(m_1)}[t_1]s_{i_1}^{(m_2)}[t_3]\right]E\left[s_{i_1}^{(n_1)}[t_2]s_{i_1}^{(n_2)}[t_4]\right]+\\
  &E\left[s_{i_1}^{(m_1)}[t_1]s_{i_1}^{(n_2)}[t_4]\right]E\left[s_{i_1}^{(n_1)}[t_2]s_{i_1}^{(m_2)}[t_3]\right]=\\
  &C_{i_1\addra{,t_1t_2}}^{(m_1,n_1)}\delra{[t_1,t_2]}C_{i_1\addra{,t_3t_4}}^{(m_2,n_2)}\delra{[t_3,t_4]}+C_{i_1\addra{,t_1t_3}}^{(m_1,m_2)}\delra{[t_1,t_3]}C_{i_1\addra{,t_2t_4}}^{(n_1,n_2)}\delra{[t_2,t_4]}+C_{i_1\addra{,t_1t_4}}^{(m_1,n_2)}\delra{[t_1,t_4]}C_{i_1\addra{,t_2t_3}}^{(n_1,m_2)}\delra{[t_2,t_3]}.
\end{align*}
\underline{Case V:} otherwise
\begin{equation*}
    \mLambda_{\tilde{i}\tilde{j}}=0.
\end{equation*}
\vspace{-0.75cm}
\section{\delra{Computation of the Elements of $\G$}\addra{Proof of Theorem \ref{theorem1}}} \label{appendix_b}
\addra{We shall now show that the solution of \eqref{derivative_solution} for every $\utheta_r, 1\leq r \leq M_q$,
\begin{align}\label{zerogradientatii}
  &\left.\frac{dB_{pq}^{(m)}}{d\widehat{Q}_{k,ij}^{(m_1,n_1)}}\right|_{\huq=\uq}=0, p\neq q, \\
  \quad\quad\forall p,q,i,j,k\in&\{1,\ldots,K\}, \forall m,m_1,n_1\in\{1,\ldots,M\}, \nonumber
\end{align}
which zeros out all the elements of $\C_{\hat{q}_{(I)}}$ (in \eqref{ISRintermsofCq}, when the sum is written explicitly) that depend on the fourth-order statistics of the sources\addra{, thereby proving the Theorem}.

By virtue of the ISR-equivariance property of the (Q)ML\addra{E}\delra{ estimate} of the demixing matrices, it suffices to consider}\delra{We are interested in} the solution of \eqref{derivative_solution} for the case $\bA=\I_{KM}=\bB$ (where $\bB\triangleq\bA^{-1}$), evaluated at $\huq=\uq$. Notice that in this case the terms in equation \eqref{first_term_matrix} boil down to
\begin{equation}\label{YforAIQQH}
\left.\frac{\partial\F}{\partial \widehat{Q}_{k,ij}^{(m_1,n_1)}}\right|_{\bB=\I_{KM}}=\delta_{ij}\delta_{ik}\E_{kk}\cdot\left\{\begin{array}{ll}
\delta_{mm_1}&\mbox{$m_1=n_1$}\\
\delta_{mm_1}+\delta_{mn_1}&\mbox{$m_1\neq n_1$}\end{array}.\right.
\end{equation}
First, notice that $i\neq j$ or $i\neq k$ implies $\Y=\O$, so for all these cases we have that $\utheta=\uo$, i.e.,
\begin{multline}\label{thetaforineqj}
  \quad\quad\left.\frac{dB_{pq}^{(m)}}{d\widehat{Q}_{k,ij}^{(m_1,n_1)}}\right|_{\huq=\uq}=0, \forall m,m_1,n_1\in\{1,\ldots,M\},\\
  \forall i,j,k,p,q\in\{1,\ldots,K\}: (i\neq j) \cup (i\neq k).
\end{multline}
Now, let us examine the cases where $i=j=k$. Obviously, there are two such cases: $m_1=n_1$ and $m_1\neq n_1$. We start with the case where $m_1=n_1$, and we assert that for this case the solution of \eqref{derivative_solution} is of the form
\begin{equation}\label{solution_of_theta_appendix}
  \utheta = \sum_{\ell=1}^{M}\alpha_{\ell}\ue_{i+(i-1)K+(\ell-1)K^2},
\end{equation}
where $\{\alpha_{\ell}\}_{\ell=1}^M$ are some coefficients (to be defined below in \eqref{computationofcoefs3}).
To show this, we begin by computing $\uy$ for this case. By \eqref{first_term_matrix} and \eqref{YforAIQQH}, it follows that
\begin{equation}\label{Yfor1stcase}
\begin{aligned}
& \Y=\left[\O \; \cdots \; \O \underbrace{\E_{ii}}_{m_1\text{-th matrix}} \O \; \cdots \; \O\right],\\
&\quad\quad\quad\quad\quad\quad\quad\Leftrightarrow \\
& \uy=\text{vec}\left(\Y\right)=\ue_{i+(i-1)K+(m_1-1)K^2}.
\end{aligned}
\end{equation}
Then, recall that the columns of the matrix $\H$ are the vectors $\left\{\uh_{[m,p,q]}\right\}$, defined in \eqref{total_derivative_vec_form}. These vectors are the concatenated column vectors of the matrices $\left\{\H_{[m,p,q]}\right\}$, respectively, hence
\begin{equation}\label{pinoutvectorz}
  \H\ue_{i+(i-1)K+(\ell-1)K^2}=\uh_{[i,i,\ell]}=\text{vec}\left(\H_{[i,i,\ell]}\right).
\end{equation}
Evaluating \eqref{second_term_mth_matrix} at $\huq=\uq$, and substituting $\bB=\I_{KM}$, we have that
\begin{align}\label{delFdelBappendix}
\begin{split}
  \left.\frac{\partial\F_m}{\partial B_{ii}^{(\ell)}}\right|_{\huq=\uq} &= \sum_{k=1}^{K}\E_{kk}\left(\E_{ii}\Q_k^{(\ell,m)}+\delta_{m\ell}\sum_{\tilde{m}=1}^{M}\Q_k^{(\tilde{m},\ell)}\E_{ii}\right)\\
  &=\E_{ii}\Q_i^{(\ell,m)}+\delta_{m\ell}\sum_{k=1}^{K}\sum_{\tilde{m}=1}^{M}\E_{kk}\Q_k^{(\tilde{m},\ell)}\E_{ii}\\
  &=Q_{i,ii}^{(\ell,m)}\E_{ii}+\delta_{m\ell}\sum_{\tilde{m}=1}^{M}Q_{i,ii}^{(\tilde{m},\ell)}\E_{kk}\\
  &=\left[\sum_{\tilde{m}=1}^{M}Q_{i,ii}^{(\tilde{m},\ell)}\left(\delta_{m\ell}+\delta_{m\tilde{m}}\right)\right]\E_{ii}\triangleq \beta(\ell,m)\E_{ii},
\end{split}
\end{align}
where we have used the fact that when $\bB=\I_{KM}=\bA$, all the matrices $\left\{\Q_k^{(m,\ell)}\right\}$ are diagonal, and that $Q_{i,ii}^{(\ell,m)}=Q_{i,ii}^{(m,\ell)}$. Using the definition in \eqref{definition_of_Y_and_H}, it follows that
\begin{align}\label{pinoutvectorz2}
\begin{split}
  \H\utheta&=\sum_{\ell=1}^{M}\alpha_{\ell}\H\ue_{i+(i-1)K+(\ell-1)K^2}\\
  &=\sum_{\ell=1}^{M}\sum_{\tilde{\ell}=1}^{M}\alpha_{\ell}\beta(\ell,\tilde{\ell})\ue_{i+(i-1)K+(\tilde{\ell}-1)K^2}\\
  &=\sum_{\tilde{\ell}=1}^{M}\left(\sum_{\ell=1}^{M}\alpha_{\ell}\beta(\ell,\tilde{\ell})\right)\ue_{i+(i-1)K+(\tilde{\ell}-1)K^2}\\
  &\triangleq \sum_{\tilde{\ell}=1}^{M}\gamma_{\tilde{\ell}}\ue_{i+(i-1)K+(\tilde{\ell}-1)K^2}.
\end{split}
\end{align}
Now, notice that
\begin{align}\label{yappearsinthesolution}
\begin{split}
  \H\utheta&=\sum_{\tilde{\ell}=1}^{M}\gamma_{\tilde{\ell}}\ue_{i+(i-1)K+(\tilde{\ell}-1)K^2}\\
  &=\gamma_{m_1}\ue_{i+(i-1)K+(m_1-1)K^2}+\sum_{\substack{\tilde{\ell}=1 \\ \tilde{\ell}\neq m_1}}^{M}\gamma_{\tilde{\ell}}\ue_{i+(i-1)K+(\tilde{\ell}-1)K^2}\\
  &= \gamma_{m_1}\uy+\sum_{\substack{\tilde{\ell}=1 \\ \tilde{\ell}\neq m_1}}^{M}\gamma_{\tilde{\ell}}\ue_{i+(i-1)K+(\tilde{\ell}-1)K^2},
\end{split}
\end{align}
so that if $\gamma_{\ell}=-\delta_{m_1\ell}$, \eqref{solution_of_theta_appendix} is a solution of \eqref{derivative_solution}. Indeed, this can be easily obtained by the following linear system of equations
\begin{equation}\label{computationofcoefs}
  \sum_{\ell=1}^{M}\alpha_{\ell}\beta(\ell,\tilde{\ell})=-\delta_{m_1\tilde{\ell}}, \forall \tilde{\ell}\in\{1,\ldots,M\},
\end{equation}
or written equivalently in matrix form
\begin{equation}\label{computationofcoefs2}
  \mBeta\ualpha=-\ue_{m_1},
\end{equation}
where the $(i,j)$-th element of the matrix $\mBeta\in\Rset^{M\times M}$ is $\beta(i,j)$, and the $i$-th element of $\ualpha\in\Rset^{M\times1}$ is $\alpha_i$. Finally, with
\begin{equation}\label{computationofcoefs3}
  \ualpha=-\mBeta^{-1}\ue_{m_1},
\end{equation}
we have that
\begin{equation}\label{finaleqinproof}
  \H\utheta=-\uy,
\end{equation}
and we conclude that \eqref{solution_of_theta_appendix} is the solution of \eqref{derivative_solution}.

We shall now address the last case where $m_1\neq n_1$ (recall that in this case also $i=j=k$). The only difference from the previous case is the value of $\uy$, since now by \eqref{YforAIQQH} (w.l.o.g. assume $n_1>m_1$)
\begin{equation}\label{Yfor2ndcase}
\begin{aligned}
& \quad\quad\Y=\left[\O \; \cdots \; \O \underbrace{\E_{ii}}_{m_1\text{-th matrix}} \O \; \cdots \; \O \underbrace{\E_{ii}}_{n_1\text{-th matrix}} \O \; \cdots \; \O\right],\\
&\quad\quad\quad\quad\quad\quad\quad\quad\quad\quad\quad\quad\Leftrightarrow \\
& \uy=\text{vec}\left(\Y\right)=\ue_{i+(i-1)K+(m_1-1)K^2}+\ue_{i+(i-1)K+(n_1-1)K^2}.
\end{aligned}
\end{equation}
Applying exactly the same machinery used in the previous case, we select the coefficients $\left\{\alpha_{\ell}\right\}_{\ell=1}^M$ for the solution vector \eqref{solution_of_theta_appendix} such that
\begin{equation}\label{computationofcoefsB}
  \sum_{\ell=1}^{M}\alpha_{\ell}\beta(\ell,\tilde{\ell})=-\left(\delta_{m_1\tilde{\ell}}+\delta_{n_1\tilde{\ell}}\right), \forall \tilde{\ell}\in\{1,\ldots,M\},
\end{equation}
or written equivalently in matrix form
\begin{equation}\label{computationofcoefs2B}
  \mBeta\ualpha=-\left(\ue_{m_1}+\ue_{n_1}\right),
\end{equation}
so that \eqref{solution_of_theta_appendix} with the coefficients
\begin{equation}\label{computationofcoefs3B}
  \ualpha=-\mBeta^{-1}\left(\ue_{m_1}+\ue_{n_1}\right),
\end{equation}
is a solution of \eqref{derivative_solution}. With this, we have presented the closed-form solution for the gradient of each element in the solution matrices of the extended SeDJoCo w.r.t. each element of the target-matrices for the particular case where $\bA=\I_{KM}=\bB$, evaluated at $\huq=\uq$.

\addra{Now, }\delra{R}\addra{r}ecall the definition of $\utheta$ in \eqref{definitionfothetaandH}, and notice th\addra{at for}\delra{e following important property of} the aforementioned solution
\begin{align}\label{zerogradientatiiappendix}
  &\left.\frac{dB_{pq}^{(m)}}{d\widehat{Q}_{k,ij}^{(m_1,n_1)}}\right|_{\huq=\uq}=0, p\neq q, \\
  \quad\quad\forall p,q,i,j,k\in&\{1,\ldots,K\}, \forall m,m_1,n_1\in\{1,\ldots,M\}, \nonumber
\end{align}
which means that asymptotically, the resulting\delra{ (approximated)} ISR from the separating solution of the extended SeDJoCo depends only on the \delra{SOS of the}sources\addra{' SOS}, as seen from equation \eqref{ISRintermsofCq} when written explicitly as a sum of \delra{multiplications}\addra{products} between the elements of the gradient \addra{vector $\ug_{ij}^{(m)}$ }and \addra{the} covariance matrix $\C_{\hat{q}_{(I)}}$.}

\bibliography{Bibfile}
\bibliographystyle{unsrt}

\end{document}